\documentclass{ocg}

\usepackage{amsmath, amssymb}
\usepackage{float}
\usepackage[bottom]{footmisc}
\usepackage{comment}
\usepackage{enumerate}
\usepackage{cite}
\usepackage{graphicx}
\usepackage{longtable}
\usepackage{wasysym}
\usepackage{multirow}
\usepackage{subfigure}
\usepackage{multirow}
\newcommand{\dollar}[0]{\$}

\newtheorem{fact}[theorem]{Fact}

\begin{document}

\title{LANGUAGE CLASSES ASSOCIATED WITH AUTOMATA OVER 
MATRIX GROUPS} 
\author[e]{\"{O}zlem Salehi}
\thanks[e]{\"{O}zlem Salehi is partially supported by T\"{U}B\.{I}TAK 
(Scientific and Technological Research Council of Turkey).} 
\thanks[z]{Flavio D'Alessandro acknowledges the support  of 7th FP 
T\"{U}B\.{I}TAK/Marie-Curie Co-Funded Brain Circulation Scheme, 2236.}
\author[z,c]{Flavio D'Alessandro}
\author[e]{Ahmet Celal Cem Say}

\address[e]{Bo\v{g}azi\c{c}i University, Department of Computer Engineering, \\
Bebek 34342 \.{I}stanbul, Turkey\\
\email{\{ozlem.salehi,say\}@boun.edu.tr}}
\address[z]{Bo\u gazi\c ci University, 
Department of Mathematics, \\ Bebek 34342, \.{I}stanbul, Turkey
}
\address[c]{Universit\`a di Roma ``La Sapienza'', Dipartimento di Matematica, \\
Piazzale Aldo Moro 2, 00185 Roma,  Italy \\
\email{dalessan@mat.uniroma1.it}
}

\maketitle

\begin{abstract}
We investigate the language classes recognized by group automata over matrix 
groups. We present a summary of the results obtained so far together with a number 
of new results. We look at the computational power of time-bounded group automata 
where the group under consideration has polynomial growth.
\end{abstract}

\section{Introduction}

Many extensions of the classical finite automaton model have been examined. 
One such variant is the group automaton (finite automaton over groups), which 
is a nondeterministic finite automaton equipped with a register which holds
an element from a group. The register is initialized to the identity
element of the group, and a computation is deemed successful if the
register is equal to the identity element at the end of the
computation after being multiplied at every step. This setup
generalizes various models such as nondeterministic blind multicounter
automata \cite{FMR67}, and finite automata with multiplication \cite{ISK76}.

Group automata were defined explicitly for the first time in the paper
\cite{MS97}. The theory of group automata has been essentially
developed in the case of free groups \cite{DM00,Co05,Ka09}, and in the
case of free Abelian groups \cite{EO04,EKO08}, where strong theorems
allow to characterize the power of such models and the combinatorial
properties of the languages recognized by these automata. The
connection between the word problem of a group and the class of
languages recognized by the associated automaton has been essential
while analysing group automata. Other papers which deal with group
automata and the word problems of groups include \cite{Ka06,CEO06}.

Our aim in this paper is to provide an overview of the languages
recognized by finite automata over matrix groups. Even in the case of
groups of matrices of low dimension that are not of the type mentioned
above, the study of group automata becomes quickly nontrivial, and
there are remarkable classes of linear groups for which little is
known about the automaton models that they define. We present some new
results about the classes of languages recognized by finite automata
over matrix groups. We also introduce the notion of time complexity
for group automata and use this to prove an impossibility result
about word problems of groups with exponential growth.

Section \ref{Section: pre} contains definitions and introduces
notation that will be used throughout the paper. In Section
\ref{Section: results}, we focus on matrix groups with integer and
rational entries. For the case of $ 2 \times 2 $ matrices, we prove
that the corresponding group automata for rational matrix groups are
more powerful than the corresponding group automata for integer 
matrix groups. We explore finite automata over some special
matrix groups such as the discrete Heisenberg
group and the Baumslag Solitar group. In Section \ref{Section:
time}, we consider group automata operating in bounded time and prove
that the word problem of a group with exponential growth cannot be
recognized by a finite automaton over a group with polynomial growth in
polynomial time, whereas in the succeeding section we summarize 
various results from the literature and some additional new related 
results. Section \ref{Section: open} lists some open questions.

\section{Preliminaries} \label{Section: pre}

\subsection{Notation and Terminology}

The following notation will be used throughout the paper: $Q$ is the set of
states, $q_0 \in Q$ denotes the initial state, $Q_a \subseteq Q$ denotes the
set of accepting states, and $\Sigma$ is the input alphabet.

An input string $w$ is placed between two endmarker symbols
on an infinite tape in the form $\cent w\dollar$. By $ w^r $, we represent the 
reverse of the string $ w $. The length of $ w $ is denoted by $ |w| $. 

$ \mathsf{REG} $, $ \mathsf{CF} $, and $ \mathsf{RE} $ denote the family of 
regular languages, context-free languages, and recursively enumerable languages,
respectively.

We assume a familiarity with some basic notions from algebra and group theory 
(see {\cite{Fr03},\cite{LS77}} for references on this topic). For a finitely generated 
group $ G $ and a set $ A $ of generators, the word problem language of $ G $ is 
the language $ W(G,A) $ over $ A $ which consists of all words that represent the 
identity element of $ G $. Most of the time, the statements about word problem 
are independent of the generating set and in these cases the word problem 
language is denoted by $ W(G) $.

\subsection{Group Automaton}

Group automata first appear explicitly in the paper 
\textit{The accepting power of finite automata over groups} by Mitrana and Stiebe 
\cite{MS97} under the name of extended finite automaton. 
The definition is formally given as follows.

Let $ K = (M,\circ, e )$ be a group under the operation denoted by 
$ \circ $ with the neutral element denoted by $ e $. 
An \textit{extended finite automaton} over the group $ K = (M,\circ, e)$ is a
6-tuple
\[ \mathcal{E} = (Q, \Sigma,K,\delta, q_0,Q_a) \]
where  the transition function $\delta$ is defined as
\[\delta: Q \times (\Sigma \cup \{\varepsilon\}) \rightarrow \mathbb{P}(Q\times M).\] 
$ \delta(q,\sigma) \ni (q',m) $ means that when $\mathcal{E}$ reads the
symbol (or empty string) $\sigma \in \Sigma \cup \{\varepsilon\}$
in state $q$, it will move to state $q'$, and write $ x\circ m $ in the register, 
where $ x $ is the old content of the register. The initial value of the register is the 
neutral element $ e $ of the group $ K $. The string is accepted if, after
completely reading the string,  $\mathcal{E}$ enters an accept state with the 
content of the register being equal to the neutral element of $ K $. 

We will prefer using the name group automaton ($ G $-automaton) instead of 
extended finite automaton over group $ G $.
 
The class of languages recognized by $ G $-automaton will be denoted as 
$ \mathfrak{L}(G) $.

\section{Matrix Groups and Associated Language Classes}
\label{Section: results}

In this section, we are going to prove some new results about the
classes of languages recognized by finite automata over various groups,
focusing on linear groups.

We will denote the \textit{free group} over $r$ generators by $ \mathbf{F}_r $.
Note that $ \mathbf{F}_0 $ is the trivial group, and $ \mathbf{F}_1 $
is isomorphic to $ \mathbb{Z} $, the additive group of integers. The
class of regular languages is characterized as the set of languages
recognized by finite automata over the trivial group $\mathbf{F}_0 $
in  \cite{DM00}.

We will denote by $ \mathbb{Z}^k $ the additive group of integer
vectors of dimension $k$. This group is
isomorphic to the \textit{free Abelian group of rank $ k$}, and  $
\mathbb{Z}^k $-automata are equivalent to nondeterministic blind
$k$-counter automata \cite{Gr78}. We denote by $ \mathbb{Q}^+$ the
multiplicative group of positive rational numbers, which is isomorphic
to a free Abelian group of infinite rank. A $ \mathbb{Q}^+$-automaton
is also equivalent to a nondeterministic finite automaton with
multiplication without equality (1NFAMW) of Ibarra et al.
\cite{ISK76}.

A characterization of context-free languages by group automata was
first stated by Dassow and Mitrana \cite{DM00}, and proven in \cite{Co05}. 
Let us note that $ \mathbf{F}_2 $ contains any free group
of rank $ n \geq 2 $ \cite{LS77}.

\begin{fact}\textup{\cite{DM00, Co05, Ka06}}\label{fact: cf}
	$\mathfrak{L}(\mathbf{F}_2)$ is the family of context-free languages.
\end{fact}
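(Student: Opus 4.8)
The plan is to prove the two inclusions $\mathfrak{L}(\mathbf{F}_2) \subseteq \mathsf{CF}$ and $\mathsf{CF} \subseteq \mathfrak{L}(\mathbf{F}_2)$ separately, exploiting in both directions the fact that the register content of an $\mathbf{F}_2$-automaton, being an element of a free group, is a reduced word over $\{a, a^{-1}, b, b^{-1}\}$ that evolves exactly like the content of a pushdown store: right-multiplication by a single generator either appends one letter or cancels the last letter. This correspondence is the bridge that makes the reduced word held in the register and the stack of a pushdown automaton interchangeable.

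For $\mathfrak{L}(\mathbf{F}_2) \subseteq \mathsf{CF}$, I would start from an arbitrary $\mathbf{F}_2$-automaton $\mathcal{E} = (Q, \Sigma, \mathbf{F}_2, \delta, q_0, Q_a)$ and build a pushdown automaton $\mathcal{P}$ that simulates it step by step. The finite control of $\mathcal{P}$ tracks the current state of $\mathcal{E}$, while the stack stores, from bottom to top, the reduced word currently held in the register. Each value $m$ occurring in the finite image of $\delta$ is a fixed reduced word of bounded length, so multiplying the register by $m$ can be realised as the corresponding bounded sequence of single-generator operations, each of which pushes a letter or, when it meets the inverse letter on top of the stack, pops; this is implemented by adding finitely many auxiliary states. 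Finally $\mathcal{P}$ accepts precisely when $\mathcal{E}$ reaches a state of $Q_a$ with empty stack, which mirrors the register being equal to the identity $e$, so that $L(\mathcal{P}) = L(\mathcal{E})$ is context-free.

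For the converse $\mathsf{CF} \subseteq \mathfrak{L}(\mathbf{F}_2)$, I would take a context-free language $L$ and fix a pushdown automaton accepting $L$ by final state and empty stack, put into a normal form in which every move pushes a single symbol, pops a single symbol, or leaves the stack untouched. Since $\mathbf{F}_2$ contains a copy of the free group $\mathbf{F}_n$ of every finite rank, I can inject the stack alphabet into a set of free generators $x_1, \dots, x_n$; a push of symbol $i$ becomes right-multiplication by $x_i$ and a pop of symbol $i$ becomes right-multiplication by $x_i^{-1}$. Because the $\mathbf{F}_2$-automaton cannot inspect its register, each pop move must be simulated by \emph{guessing} the symbol being removed; freeness of the group guarantees that a wrong guess leaves a non-cancelling, hence non-trivial, reduced word, so the register returns to $e$ along an accepting run if and only if the simulated stack was correctly emptied. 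Combining this register condition with the accepting states of the automaton yields exactly $L$.

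The main obstacle is this converse direction, and specifically the fact that a group automaton writes blindly into its register and can never read it. The correctness of the stack simulation therefore rests entirely on the freeness of $\mathbf{F}_2$: one must argue that the only way the accumulated product of guessed generators and inverses can reduce to $e$ is when every guessed pop matched the symbol actually on top of the stack, so that the nondeterministic simulation of the LIFO discipline is faithful. Encoding an arbitrarily large stack alphabet into two generators via $\mathbf{F}_n \hookrightarrow \mathbf{F}_2$, and keeping the pushdown normal form compatible with the one-generator-at-a-time multiplications used in the first inclusion, are the remaining technical points.
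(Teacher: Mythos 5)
The direction $\mathfrak{L}(\mathbf{F}_2)\subseteq\mathsf{CF}$ in your proposal is fine: the register of an $\mathbf{F}_2$-automaton always holds a reduced word, the pushdown automaton can keep that word on its stack, and since a PDA \emph{can} inspect its top symbol it knows whether to push or cancel. The problem is the converse inclusion, and it sits exactly at the step you flag as the main obstacle. Your claim that ``freeness of the group guarantees that a wrong guess leaves a non-cancelling, hence non-trivial, reduced word'' is false. Free-group reduction is two-sided: $x_i^{-1}x_i=e$ just as much as $x_ix_i^{-1}=e$, whereas stack discipline only ever produces cancellations of the second kind. Concretely, suppose the register holds $x_a$ (symbol $a$ on the stack) and the automaton wrongly guesses a pop of $b$, then pushes $b$, then pops $a$: the register evolves as $x_a\mapsto x_ax_b^{-1}\mapsto x_ax_b^{-1}x_b=x_a\mapsto e$. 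The register returns to the identity even though the second move was an illegal pop, so an accepting run of the group automaton need not correspond to any legal run of the pushdown automaton, and one can build PDAs for which this makes the simulation accept strings outside $L$. No choice of normal form removes the problem, because the group element accumulated by a ``guess top $A$, leave stack unchanged'' or ``guess top $A$, replace by $A\gamma$'' move is independent of whether the guess was right; the top-of-stack tests that drive the PDA's control are simply invisible to the register.

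This is not a pedantic quibble but the actual historical sticking point of this statement: the paper itself pointedly says the characterization was ``first stated by Dassow and Mitrana \cite{DM00}, and proven in \cite{Co05}'' --- the original argument was essentially the one you give, and a correct proof had to wait for Corson and, independently, Kambites. The paper offers no proof at all (it cites \cite{DM00,Co05,Ka06}), so there is nothing to compare your construction against except the literature, where the hard inclusion is obtained differently, e.g.\ via the Chomsky--Sch\"utzenberger theorem ($L=h(D_2\cap R)$) together with closure of $\mathfrak{L}(\mathbf{F}_2)$ under homomorphisms and intersection with regular sets, reducing everything to showing that the \emph{one-sided} Dyck language lies in the rational cone of the two-sided one $W(\mathbf{F}_2)$ --- precisely the gap between the two notions of cancellation that your argument elides. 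To repair your proof you would need either that route or an encoding of stack contents into $\mathbf{F}_2$ for which an illegal pop provably can never be undone; the identity homomorphic encoding ``push $i\mapsto x_i$, pop $i\mapsto x_i^{-1}$'' does not have this property.
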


We denote by $GL(2,\mathbb{Z})$ the general linear group of degree two
over the field of integers, that is the
group of $ 2\times 2 $ invertible matrices with integer entries. Note
that these matrices have determinant $\pm 1$. Restricting the matrices
in $GL(2,\mathbb{Z})$ to those that have determinant 1, we obtain the
special linear group of degree two over the field of integers,
$SL(2,\mathbb{Z})$.

Let $ \mathbf{G} $ be the group generated by the  matrices
\[
M_{a}=
\left [
\begin{array}{cc}
1&2\\
0&1\\
\end{array}
\right ]~~~\mbox{and}~~~
M_{b}=
\left [
\begin{array}{cc}
1&0\\
2&1\\
\end{array}
\right ].
\]
There exists an isomorphism $ \varphi $ from $\mathbf{F}_2 $ onto
$\mathbf{G} $ by \cite{KM79}. Note that $ M_a $ and $ M_b $ are
integer matrices with determinant 1, which proves that $ \mathbf{F}_2
$ is a subgroup of $ SL(2,\mathbb{Z}) $.

Now the question is whether $ \mathfrak{L}(GL(2,\mathbb{Z}))$ and $
\mathfrak{L}(SL(2,\mathbb{Z}))$ correspond to larger classes of
languages than the class of context-free languages. We are going to
use the following fact to prove that the answer is negative.

\begin{fact}\textup{\cite{Co05}}  \label{fact: finite}
	Suppose $G$ is a finitely generated group and $H$ is a subgroup of
	finite index. Then $\mathfrak{L}(G) = \mathfrak{L}(H)$.
\end{fact}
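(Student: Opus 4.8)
The plan is to prove the two inclusions $\mathfrak{L}(H)\subseteq\mathfrak{L}(G)$ and $\mathfrak{L}(G)\subseteq\mathfrak{L}(H)$ separately. The first inclusion I expect to be essentially immediate and to require nothing about the index: since $H$ is a subgroup of $G$, its neutral element coincides with that of $G$, so any $H$-automaton can be read verbatim as a $G$-automaton. Every element written into its register already lies in $H\subseteq G$, the register therefore never leaves $H$, and the acceptance test ``register equals the neutral element'' is literally the same condition in both groups. Hence the same automaton recognizes the same language over $G$, giving $\mathfrak{L}(H)\subseteq\mathfrak{L}(G)$.

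For the reverse inclusion, where finiteness of the index is the essential ingredient, the idea is to record a coset in the finite control and keep only the ``$H$-part'' of the register. Let $n=[G:H]$ and fix a set $T=\{t_1,\dots,t_n\}$ of right coset representatives with $t_1=e$, so that every $g\in G$ has a unique factorization $g=h\,t_i$ with $h\in H$ and $t_i\in T$. Given a $G$-automaton $\mathcal{E}=(Q,\Sigma,G,\delta,q_0,Q_a)$, I would build an $H$-automaton $\mathcal{E}'$ whose state set is $Q\times T$, with initial state $(q_0,t_1)$ and accepting states $Q_a\times\{t_1\}$. For each transition $\delta(q,\sigma)\ni(q',m)$ and each $t_i\in T$, determine the unique $t_j\in T$ with $t_i\,m\in H t_j$ and set $h'=t_i\,m\,t_j^{-1}\in H$; the new automaton then has $\delta'((q,t_i),\sigma)\ni((q',t_j),h')$. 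Thus the coset representative is absorbed into the states while the register of $\mathcal{E}'$ multiplies only by elements of $H$.

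Correctness I would establish through the invariant that, along corresponding computation paths, $\mathcal{E}$ reaches state $q$ with register content $g$ exactly when $\mathcal{E}'$ reaches state $(q,t_i)$ with register content $h$, where $g=h\,t_i$. The base case holds because $e=e\cdot t_1$; the inductive step is the identity $g\,m=h\,t_i\,m=h\,(h' t_j)=(h\,h')\,t_j$, which matches the transition rule of $\mathcal{E}'$. Acceptance then transfers by uniqueness of the factorization: $\mathcal{E}$ ends in $Q_a$ with $g=e$ iff $t_i=t_1$ and $h=e$, which is exactly the condition for $\mathcal{E}'$ to be in $Q_a\times\{t_1\}$ with register equal to $e$. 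Since the index is finite, $Q\times T$ is a finite state set and only finitely many elements $h'$ are ever written, so $\mathcal{E}'$ is a genuine $H$-automaton; finite generation of $G$ guarantees (by Schreier's theorem) that $H$ is finitely generated, but the construction itself uses only the finiteness of the index.

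The main obstacle, such as it is, is bookkeeping rather than depth: one must keep the side of the cosets consistent with the automaton's right-multiplication convention $x\mapsto x\circ m$, so that right coset representatives (not left) yield a well-defined rule $t_i\,m=h'\,t_j$ with $h'\in H$. Once the multiplication order is aligned, the partition of $G$ into right cosets makes $t_j$ and $h'$ uniquely determined, and the remaining verification is the routine induction above.
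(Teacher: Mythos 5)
Your proof is correct. The paper does not prove this statement at all --- it is quoted as a known fact from \cite{Co05} --- but your argument (the easy inclusion $\mathfrak{L}(H)\subseteq\mathfrak{L}(G)$ by reading an $H$-automaton as a $G$-automaton, plus the transversal construction that stores a right coset representative in the finite control and keeps the $H$-part in the register) is the standard proof of this result and matches the one in the cited source; your bookkeeping of right cosets against the right-multiplication convention, the invariant $g=h\,t_i$, and the observation that finite generation is not actually needed for the construction are all sound.
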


Now we are ready to state our theorem.

\begin{theorem}\label{theorem:gl}
$\mathfrak{L}(SL(2,\mathbb{Z})) = \mathfrak{L}(GL(2,\mathbb{Z})) = \mathsf{CF} $. 
\end{theorem}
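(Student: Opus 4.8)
The plan is to realize $\mathsf{CF}$, $\mathfrak{L}(SL(2,\mathbb{Z}))$, and $\mathfrak{L}(GL(2,\mathbb{Z}))$ as a single chain of equalities, exploiting the subgroup tower $\mathbf{G} \leq SL(2,\mathbb{Z}) \leq GL(2,\mathbb{Z})$ together with Fact \ref{fact: finite}. The guiding principle is that whenever one group sits inside another with finite index, the two associated language classes coincide; so it suffices to verify that each inclusion in the tower has finite index and then to connect the bottom of the tower to the context-free languages through Fact \ref{fact: cf}.

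First I would dispose of the top inclusion. The determinant restricts to a surjective homomorphism $\det \colon GL(2,\mathbb{Z}) \to \{+1,-1\}$ whose kernel is exactly $SL(2,\mathbb{Z})$, so $SL(2,\mathbb{Z})$ is a normal subgroup of index $2$ in $GL(2,\mathbb{Z})$. Fact \ref{fact: finite} then gives $\mathfrak{L}(SL(2,\mathbb{Z})) = \mathfrak{L}(GL(2,\mathbb{Z}))$ at once, reducing the theorem to the claim $\mathfrak{L}(SL(2,\mathbb{Z})) = \mathsf{CF}$.

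The core of the argument, and the step I expect to be the main obstacle, is showing that the free subgroup $\mathbf{G} = \langle M_a, M_b\rangle$ has finite index in $SL(2,\mathbb{Z})$. My plan is to route through the principal congruence subgroup $\Gamma(2)$ of level $2$, namely the matrices congruent to the identity modulo $2$. Reduction modulo $2$ gives a surjection $SL(2,\mathbb{Z}) \to SL(2,\mathbb{Z}/2\mathbb{Z})$ with kernel $\Gamma(2)$; since the target is a finite group of order $6$, the subgroup $\Gamma(2)$ has finite index in $SL(2,\mathbb{Z})$. Both $M_a$ and $M_b$ are congruent to the identity modulo $2$, so $\mathbf{G} \leq \Gamma(2)$, and it remains to bound $[\Gamma(2):\mathbf{G}]$. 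Here I would use that $-I \in \Gamma(2)$ while $\mathbf{G}$, being free and hence torsion-free, meets $\{\pm I\}$ trivially, together with the classical fact that the images of the two parabolic generators already generate $\Gamma(2)$ modulo $\pm I$; this forces $[\Gamma(2):\mathbf{G}]$ to be finite (in fact $2$), whence $[SL(2,\mathbb{Z}):\mathbf{G}]$ is finite. This is the delicate point, since it is exactly the Sanov-subgroup computation and the only place where genuine structural information about $SL(2,\mathbb{Z})$ is required.

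With finite index in hand, a second application of Fact \ref{fact: finite} yields $\mathfrak{L}(\mathbf{G}) = \mathfrak{L}(SL(2,\mathbb{Z}))$. Finally, the isomorphism $\varphi \colon \mathbf{F}_2 \to \mathbf{G}$ lets any $\mathbf{G}$-automaton be relabelled as an $\mathbf{F}_2$-automaton and conversely, so $\mathfrak{L}(\mathbf{G}) = \mathfrak{L}(\mathbf{F}_2)$, which equals $\mathsf{CF}$ by Fact \ref{fact: cf}. Combining the three equalities closes the chain $\mathsf{CF} = \mathfrak{L}(\mathbf{F}_2) = \mathfrak{L}(\mathbf{G}) = \mathfrak{L}(SL(2,\mathbb{Z})) = \mathfrak{L}(GL(2,\mathbb{Z}))$ and proves the theorem.
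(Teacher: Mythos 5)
Your argument is correct and follows essentially the same route as the paper's proof: two applications of Fact \ref{fact: finite} (first for $SL(2,\mathbb{Z})$ of index $2$ in $GL(2,\mathbb{Z})$, then for the Sanov subgroup $\mathbf{G} \cong \mathbf{F}_2$ of index $12$ in $SL(2,\mathbb{Z})$), followed by Fact \ref{fact: cf}. The only difference is that you sketch the congruence-subgroup computation via $\Gamma(2)$ to justify the finite index, where the paper simply cites the literature for the index being $12$; do also note that Fact \ref{fact: finite} requires the ambient group to be finitely generated, a hypothesis the paper verifies explicitly and you leave implicit.
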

\begin{proof} We are going to use Fact  \ref{fact: finite} to prove the result. 
Since $ SL(2,\mathbb{Z}) $ has index 2 in $ GL(2,\mathbb{Z})$
and $GL(2,\mathbb{Z})$ is finitely generated, $\mathfrak{L}(GL(2,\mathbb{Z})) =
\mathfrak{L}(SL(2,\mathbb{Z}))$. Since $\mathbf{F}_2$ has index 12 in 
$SL(2,\mathbb{Z})$ \cite{BO08} and $SL(2,\mathbb{Z})$ is finitely generated, $
\mathfrak{L}(SL(2,\mathbb{Z})) = \mathfrak{L}(\mathbf{F}_2)$ which is equal to 
the family of context-free languages by Fact \ref{fact: cf}.
\end{proof}

Let us now investigate the group $SL(3,\mathbb{Z})$, the group of 
$3 \times 3 $ integer matrices with determinant~$1$.

We start by looking at an important subgroup of $SL(3,\mathbb{Z})$,
the discrete Heisenberg group. The discrete Heisenberg group
$\mathbf{H}$ is defined as $\langle a,b | ab=bac,ac=ca,bc=cb
\rangle$
where
$c=a^{-1}b^{-1}ab$ is the commutator of $a$ and $b$.

$$ a=\left [
\begin{array}{ccc}
1&1&0\\
0&1&0\\
0&0&1
\end{array}
\right ]~~~
b=\left [
\begin{array}{ccc}
1&0&0\\
0&1&1\\
0&0&1
\end{array}
\right ]~~~
c=\left [
\begin{array}{ccc}
1&0&1\\
0&1&0\\
0&0&1
\end{array}
\right ]
$$

Any element $g \in \mathbf{H}$ can be written uniquely as $b^ja^ic^k$.
$$ g=\left [
\begin{array}{ccc}
1&i&k\\
0&1&j\\
0&0&1
\end{array}
\right ] = b^ja^ic^k
$$

It is shown in \cite{Re10} that the languages
$\mathtt{MULT}=\{x^py^qz^{pq} | p,q\geq 0\}$,
$\mathtt{COMPOSITE}=\{x^{pq}| p,q >1\}$ and
$\mathtt{MULTIPLE}=\{x^py^{pn}|p \in \mathbb{N}\}$ can be recognized
by a $\mathbf{H}$-automaton, using the special multiplication property
of the group.

Correcting a small error in \cite{Re10}, we rewrite the multiplication
property of the elements
of $\mathbf{H}$.

\begin{equation*}\label{eq0}
 (b^xa^yc^z)(b^{x'}a^{y'}c^{z'})= b^{x+x'}a^{y+y'}c^{z+z'+yx'} 
\end{equation*}

We can make the following observation using the fact that
$\mathfrak{L}(\mathbf{H})$ contains non-context-free languages.

\begin{theorem}
	$\mathfrak{L}(SL(2,\mathbb{Z})) \subsetneq \mathfrak{L}(SL(3,\mathbb{Z}))$.
\end{theorem}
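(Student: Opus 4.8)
The plan is to establish the inclusion and its strictness separately, and the underlying principle is elementary: whenever $H$ is a subgroup of $G$, every $H$-automaton is already a $G$-automaton whose register happens to take values only in $H$, and since the identity of $H$ coincides with the identity of $G$, the acceptance condition is unchanged. Hence $H \leq G$ implies $\mathfrak{L}(H) \subseteq \mathfrak{L}(G)$. I would derive the theorem by applying this principle twice: once to obtain the inclusion, and once to import a non-context-free witness.

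For the inclusion, I would embed $SL(2,\mathbb{Z})$ into $SL(3,\mathbb{Z})$ via the block map
\[
A \longmapsto \begin{pmatrix} A & 0 \\ 0 & 1 \end{pmatrix},
\]
which is an injective homomorphism preserving determinant $1$. Thus $SL(2,\mathbb{Z})$ is isomorphic to a subgroup of $SL(3,\mathbb{Z})$, and by the principle above $\mathfrak{L}(SL(2,\mathbb{Z})) \subseteq \mathfrak{L}(SL(3,\mathbb{Z}))$. By Theorem \ref{theorem:gl} the left-hand side is exactly $\mathsf{CF}$.

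For strictness, I would use the discrete Heisenberg group $\mathbf{H}$, which by its defining matrices is a subgroup of $SL(3,\mathbb{Z})$. Applying the principle once more gives $\mathfrak{L}(\mathbf{H}) \subseteq \mathfrak{L}(SL(3,\mathbb{Z}))$. Since $\mathfrak{L}(\mathbf{H})$ contains a non-context-free language --- for instance $\mathtt{MULT}$, recognizable by an $\mathbf{H}$-automaton by \cite{Re10} --- the class $\mathfrak{L}(SL(3,\mathbb{Z}))$ contains a language outside $\mathsf{CF}$. As $\mathfrak{L}(SL(2,\mathbb{Z})) = \mathsf{CF}$ consists solely of context-free languages, the inclusion must be proper.

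The proof is essentially an assembly of facts already in hand, so there is no serious obstacle; the only points requiring care are checking that the block embedding indeed lands in $SL(3,\mathbb{Z})$ and that $\mathbf{H}$ is genuinely a subgroup there, both immediate from the explicit matrices, together with invoking the cited fact that $\mathtt{MULT}$ lies in $\mathfrak{L}(\mathbf{H}) \setminus \mathsf{CF}$. The strictness half is where the genuine content sits, since it rests entirely on the non-context-freeness of the Heisenberg witness language.
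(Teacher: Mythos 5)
Your proposal is correct and follows essentially the same route as the paper's own proof: the inclusion via the block embedding of $SL(2,\mathbb{Z})$ into $SL(3,\mathbb{Z})$ (which the paper dismisses as obvious), and strictness via a non-context-free language recognized by a Heisenberg-group automaton inside $SL(3,\mathbb{Z})$. The only difference is the choice of witness --- you use $\mathtt{MULT}$ where the paper uses $\mathtt{COMPOSITE}$ --- and both work.
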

\begin{proof}
	It is obvious that a $SL(2,\mathbb{Z})$-automaton can be simulated by
	a $SL(3,\mathbb{Z})$-automaton.
	Note that $\mathfrak{L}(SL(2,\mathbb{Z}))$ is the family of
	context-free languages by Theorem \ref{theorem:gl}. Since
	$\mathfrak{L}(\mathbf{H}) \subsetneq \mathfrak{L}(SL(3,\mathbb{Z}))$
	and the non-context-free language $\mathtt{COMPOSITE}=\{x^{pq}| p,q
	>1\}$ can be recognized by a $ \mathbf{H} $-automaton \cite{Re10}, the
	result follows.
\end{proof}

Now let us move on to the discussion about matrix groups with rational entries.

Let us denote by $GL(2,\mathbb{Q})$ the general linear group of degree
two over the field of rational  numbers, that is, the group of
invertible matrices with rational entries.  Restricting the matrices
in $GL(2,\mathbb{Q})$ to those that have determinant 1, we obtain the
special  linear group of degree two over the field of rationals,
$SL(2,\mathbb{Q})$.

We will start by proving that allowing rational entries enlarges the
class of languages recognized by matrices with determinant 1.

\begin{theorem}\label{theorem: UPOWODD}
$ \mathfrak{L}(SL(2,\mathbb{Z})) \subsetneq \mathfrak{L}(SL(2,\mathbb{Q}))  $.
\end{theorem}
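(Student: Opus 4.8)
The plan is to prove the two inclusions separately. First I would dispatch the containment $\mathfrak{L}(SL(2,\mathbb{Z})) \subseteq \mathfrak{L}(SL(2,\mathbb{Q}))$: since $SL(2,\mathbb{Z})$ is literally a subgroup of $SL(2,\mathbb{Q})$, every $SL(2,\mathbb{Z})$-automaton already \emph{is} an $SL(2,\mathbb{Q})$-automaton (the register values simply happen to be integer matrices), so its language lies in $\mathfrak{L}(SL(2,\mathbb{Q}))$. This is the same subgroup-simulation argument used earlier in the paper. By Theorem~\ref{theorem:gl} the left-hand class equals $\mathsf{CF}$, so the entire problem reduces to exhibiting a single non-context-free language recognized by some $SL(2,\mathbb{Q})$-automaton; any such language witnesses the strictness.

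For the witness I would exploit the one feature that integer matrices lack, namely genuine multiplicative (division) counting. The diagonal matrices $\mathrm{diag}(q,q^{-1})$ with $q \in \mathbb{Q}^{+}$ all have determinant $1$ and form a subgroup of $SL(2,\mathbb{Q})$ isomorphic to $\mathbb{Q}^{+}$; hence $\mathfrak{L}(\mathbb{Q}^{+}) \subseteq \mathfrak{L}(SL(2,\mathbb{Q}))$, and through the equivalence of $\mathbb{Q}^{+}$-automata with the $1$NFAMW of Ibarra et al.\ this already imports non-context-free languages. More concretely, the two multiplicatively independent generators $\mathrm{diag}(2,\tfrac{1}{2})$ and $\mathrm{diag}(3,\tfrac{1}{3})$ generate a subgroup that is free abelian of rank $2$ (a power $\mathrm{diag}(2^{i}3^{j},2^{-i}3^{-j})$ is the identity only when $i=j=0$, by unique factorization), so it is isomorphic to $\mathbb{Z}^{2}$, and an $SL(2,\mathbb{Q})$-automaton can therefore simulate a nondeterministic blind $2$-counter automaton.

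With this embedding in hand I would take as witness $L=\{a^{n}b^{n}c^{n} : n\ge 0\}$. The recognizing automaton enforces the pattern $a^{*}b^{*}c^{*}$ in its finite control and multiplies the register by the determinant-$1$ diagonal matrices whose top-left entries are $2$, $\tfrac{3}{2}$, and $\tfrac{1}{3}$ on reading $a$, $b$, and $c$ respectively; after scanning $a^{n}b^{m}c^{p}$ the top-left entry is $2^{\,n-m}3^{\,m-p}$, which equals $1$ (so the register is the identity) if and only if $n=m=p$. Thus the automaton accepts exactly $L$. It then remains to observe that $L\notin\mathsf{CF}$, which is immediate from the classical pumping lemma for context-free languages; combined with the first paragraph this yields the strict inclusion.

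I expect the main obstacle to be the correctness of the recognizing automaton rather than either inclusion taken alone: one must choose the rational generators to be multiplicatively independent so that distinct counter configurations never collapse onto the identity, and must check that the finite control forbids nondeterminism from producing spurious acceptances on strings outside $a^{*}b^{*}c^{*}$. If one instead prefers a genuinely exponential (power-type) witness rather than a counter language, the difficulty migrates entirely onto the recognizability side, where realizing an exponential length constraint with a single one-sided register product is the delicate point, while non-context-freeness then follows from a routine unary pumping argument.
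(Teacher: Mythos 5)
Your proposal is correct, but it reaches the separation by a genuinely different route than the paper. The paper also reduces everything to Theorem~\ref{theorem:gl} ($\mathfrak{L}(SL(2,\mathbb{Z}))=\mathsf{CF}$), but its witness is the \emph{unary} nonregular language $\{a^{2^{2n+1}} \mid n\geq 0\}$, recognized by a rather delicate four-matrix $SL(2,\mathbb{Q})$-automaton that guesses a power of two, counts it down by repeated subtraction of $\tfrac{1}{2^{x+1}}$, and then cancels the diagonal; non-membership in $\mathsf{CF}$ follows because unary context-free languages are regular. You instead embed $\mathbb{Z}^{2}$ into $SL(2,\mathbb{Q})$ via the multiplicatively independent diagonal matrices $\mathrm{diag}(2,\tfrac12)$ and $\mathrm{diag}(3,\tfrac13)$ and take the blind-two-counter witness $\{a^{n}b^{n}c^{n}\}$; your correctness check (top-left entry $2^{n-m}3^{m-p}$ equals $1$ iff $n=m=p$, by unique factorization) is sound, and the finite-control restriction to $a^{*}b^{*}c^{*}$ handles spurious branches. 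Your argument is markedly simpler and entirely adequate for this theorem. What the paper's harder construction buys is reuse: the same unary witness is needed later to show $\mathfrak{L}(\mathbb{Q}^{+})\subsetneq\mathfrak{L}(SL(2,\mathbb{Q}))$, where your witness would fail, since $\{a^{n}b^{n}c^{n}\}$ \emph{is} recognized by a $\mathbb{Q}^{+}$-automaton (equivalently a 1NFAMW), whereas by Fact~\ref{fact: unary} no unary nonregular language is. So if you only want Theorem~\ref{theorem: UPOWODD}, your proof is the more economical one; if you want the stronger downstream separation as well, the exponential unary witness is the one to build.
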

\begin{proof} It is obvious that $ \mathfrak{L}(SL(2,\mathbb{Z}))
	\subseteq \mathfrak{L}(SL(2,\mathbb{Q}))  $. We will prove that the
	inclusion is proper.
	
	Let us construct a $SL(2,\mathbb{Q})$-automaton $\mathcal{G}$
	recognizing the language $\mathtt{L}=\{a^{2^{2n+1}} | n \geq 0 \}$. 
	The state diagram of $ \mathcal{G} $ and the matrices are given in 
	Figure \ref{fig:uoddpow}. Without scanning
	any input symbol, $\mathcal{G}$ first multiplies its register with
	the matrix $A_1$. $\mathcal{G}$ then multiplies its register with the
	matrix $ A_2 $ successively until nondeterministically moving to the
	next state. After that point, $\mathcal{G}$ starts reading the string
	and multiplies its register with the matrix $A_3$ for each scanned
	$a$. At some point, $\mathcal{G}$ nondeterministically stops reading
	the rest of the string and multiplies its register with the matrix
	$A_4$. After sucessive multiplications with $A_4$, $\mathcal{G}$
	nondeterministically decides moving to an accept state. 
	
	\begin{figure}[h]
		\centering
		\includegraphics[width=0.8\linewidth]{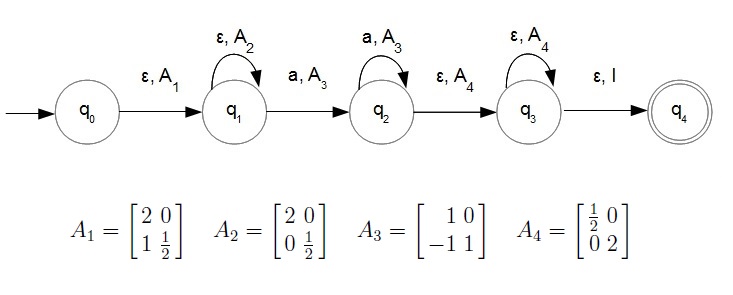}
		\caption{State diagram of $ \mathcal{G} $ accepting the language 
		$\mathtt{L}=\{a^{2^{2n+1}} | n \geq 0 \}$ }
		\label{fig:uoddpow}	
	\end{figure}

	Let us trace the value of the register at different stages of the
	computation. Before reading the first input symbol, the register has
	the value
	$$\left [
	\begin{array}{cc}
	2^{x+1}&0\\
	2^x  &\frac{1}{2^{x+1}}\\
	\end{array}
	\right ]$$
	
	\noindent as a result of the multiplications with the matrix $A_1$
	and $ x $ many $ A_2 $'s. Multiplication
	with each $A_3$ leaves $2^{x+1}$ and $ \frac{1}{2^{x+1}} $ unchanged
	while subtracting $ \frac{1}{2^{x+1}} $ from $2^x$ for each scanned
	$a$. As a result of $ y $  multiplications with $ A_3 $, the register
	will have the value
	
	$$\left [
	\begin{array}{cc}
	2^{x+1}&0\\
	2^x-\frac{y}{2^{x+1}} &\frac{1}{2^{x+1}}\\
	\end{array}
	\right ].$$
	
	For the rest of the computation, $\mathcal{G}$ will multiply its
	register with $A_4$ until
	nondeterministically moving to the final state. As a result of $ z $
	multiplications with $ A_4 $, the register will have the value
	
	$$\left [
	\begin{array}{cc}
	\frac{2^{x+1}}{2^z}&0\\
	\bigl( 2^x-\frac{y}{2^{x+1}}\bigr)\frac{1}{2^z} &\frac{2^z}{2^{x+1}}\\
	\end{array}
	\right ].$$
	
	The final value of the register is equal to the identity matrix when
	$ y=2^{2x+1} $ and $ z=x+1 $, which is possible only when the length
	of the input string is $ 2^{2x+1} $ for some $ x\geq 0 $. In the
	successful branch, the register will be equal to the identity matrix
	and $\mathcal{G}$ will end up in the final state having successfully
	read the input string. For input strings which are not members of
	$\mathtt{L}$, either the computation will end before reading the whole
	input string, or the final state will be reached with the register
	value not equaling the identity matrix.
	
	Since the matrices used during the computation are 2 by 2 rational
	matrices with determinant 1, $ \mathtt{L} \in
	\mathfrak{L}(SL(2,\mathbb{Q})) $. $ \mathfrak{L}(SL(2,\mathbb{Q})) $
	contains a unary nonregular language, which is not true for $
	\mathfrak{L}(SL(2,\mathbb{Z})) $ by Theorem \ref{theorem:gl} and we
	conclude the result.
\end{proof}

We will now look at a special subgroup of $GL(2,\mathbb{Q})$.

For two integers $m$ and $n$, the \textit{Baumslag-Solitar group}
$BS(m,n)$ is defined as $BS(m,n)=\langle
a,b | ba^mb^{-1}=a^n\rangle$. We are going to focus on
$BS(1,2)=\langle a,b|bab^{-1}=a^2\rangle$.

Consider the matrix group $G_{BS}$ generated by the matrices

$$A=\left [
\begin{array}{cc}
1&0\\
-1 &1\\
\end{array}
\right ]~~~\mbox{and}~~~
B=\left [
\begin{array}{cc}
1/2&0\\
0 &1\\
\end{array}
\right ].$$

Consider the isomorphism $a \mapsto A$, $b \mapsto B$. The matrices
$A$ and $B$ satisfy the property $BAB^{-1} = A^2$,
\[ \left [
\begin{array}{cc}
1/2&0\\
0 &1\\
\end{array}
\right ]
\left [
\begin{array}{cc}
1&0\\
-1 &1\\
\end{array}
\right ]
\left [
\begin{array}{cc}
2&0\\
0&1\\
\end{array}
\right ]=
\left [
\begin{array}{cc}
1&0\\
-2&1\\
\end{array}
\right ],  \]
and we conclude that $G_{BS}  $ is isomorphic to $BS(1,2)$.

We will prove that there exists a $ BS(1,2) $-automaton which
recognizes a non-context-free language.

\begin{theorem}\label{theorem: upow}
	$ \mathfrak{L}(BS(1,2)) \nsubseteq \mathsf{CF} $.
\end{theorem}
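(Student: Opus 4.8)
The plan is to exhibit a concrete non-context-free language in $\mathfrak{L}(BS(1,2))$ by building an explicit automaton over the matrix realization $G_{BS}=\langle A,B\rangle$, in the same spirit as the proof of Theorem~\ref{theorem: UPOWODD}. I would target the unary language $\mathtt{L}=\{a^{2^n}\mid n\ge 0\}$: since every context-free language over a one-letter alphabet is regular, and $\mathtt{L}$ is visibly non-regular (the gaps $2^{n+1}-2^{n}=2^{n}$ are unbounded), producing a $BS(1,2)$-automaton for $\mathtt{L}$ immediately yields $\mathfrak{L}(BS(1,2))\nsubseteq\mathsf{CF}$.

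First I would record the elementary identities in $G_{BS}$: with $A=\left[\begin{smallmatrix}1&0\\-1&1\end{smallmatrix}\right]$ and $B=\left[\begin{smallmatrix}1/2&0\\0&1\end{smallmatrix}\right]$ one has $A^{m}=\left[\begin{smallmatrix}1&0\\-m&1\end{smallmatrix}\right]$, $B^{n}=\left[\begin{smallmatrix}2^{-n}&0\\0&1\end{smallmatrix}\right]$, and $A^{-1}=\left[\begin{smallmatrix}1&0\\1&1\end{smallmatrix}\right]$. The automaton would operate in two nondeterministic phases. In an initial $\varepsilon$-phase it multiplies the register by $B$ a guessed number $n$ of times, then once by $A^{-1}$, then by $B^{-1}$ a guessed number $n'$ of times, producing $B^{n}A^{-1}B^{-n'}=\left[\begin{smallmatrix}2^{\,n'-n}&0\\2^{\,n'}&1\end{smallmatrix}\right]$. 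In the reading phase it multiplies by $A$ for each scanned $a$, so after reading $a^{m}$ the register equals $\left[\begin{smallmatrix}2^{\,n'-n}&0\\2^{\,n'}-m&1\end{smallmatrix}\right]$, and the acceptance condition (register $=$ identity) forces $n'=n$ and $m=2^{n}$. Hence the accepted language is exactly $\mathtt{L}$, and every matrix used lies in $G_{BS}\cong BS(1,2)$.

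I expect the crux to be manufacturing the exponential gap, since right-multiplying repeatedly by a fixed matrix only yields a quantity growing linearly in the number of steps; the exponential $2^{n}$ must instead be produced by the conjugation structure $B^{n}(\,\cdot\,)B^{-n'}$, which rescales the entry that $A^{-1}$ deposits in the lower-left corner. The delicate point to verify is that the single matrix equation ``register $=$ identity'' simultaneously (i) synchronizes the two independent guesses through the upper-left entry $2^{\,n'-n}=1$ and (ii) pins the input length through the lower-left entry $2^{\,n'}-m=0$; once this is checked, the unary non-CF argument closes the proof. (Alternatively, one could invoke the fact that $W(G)\in\mathfrak{L}(G)$ for every finitely generated $G$ together with the Muller--Schupp theorem: as $BS(1,2)$ is not virtually free, $W(BS(1,2))$ is non-context-free, giving the statement with no explicit construction.)
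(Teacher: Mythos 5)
Your proposal is correct and follows essentially the same route as the paper: it builds a $BS(1,2)$-automaton over the matrix realization $G_{BS}$ for the unary language $\{a^{2^n}\mid n\ge 0\}$ and concludes via the fact that unary context-free languages are regular. The only difference is cosmetic: the paper produces the initial register value $\bigl[\begin{smallmatrix}2^k&0\\2^k-1&1\end{smallmatrix}\bigr]$ by iterating the single matrix $A_1=B^{-1}A^{-1}$, while you synchronize two independent guesses through the conjugate $B^{n}A^{-1}B^{-n'}$; both computations check out.
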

\begin{proof}
	Let us construct a $BS(1,2)$-automaton $\mathcal{G}$ recognizing the
	language $\mathtt{UPOW}=\{a^{2^n}|n\geq 0\}$. The state diagram of 
	$ \mathcal{G} $ and the matrices are given in Figure \ref{fig:upow}. 
	Without scanning any input symbol, $\mathcal{G}$ multiplies its register 
	with the matrix $A_1$ successively. $\mathcal{G}$
	nondeterministically moves to the next state reading the first input
	symbol without modifying the register.
	After that point, $\mathcal{G}$ starts reading the string and
	multiplies its register with the matrix $A_2$
	for each scanned $a$. At some point, $\mathcal{G}$
	nondeterministically stops reading
	the rest of the string and multiplies its register with the element
	$A_3$. After successive multiplications
	with $A_3$, $\mathcal{G}$ nondeterministically decides to move to an
	accept state. 
	\begin{figure}[!htb]
		\centering
		\includegraphics[width=0.6\linewidth]{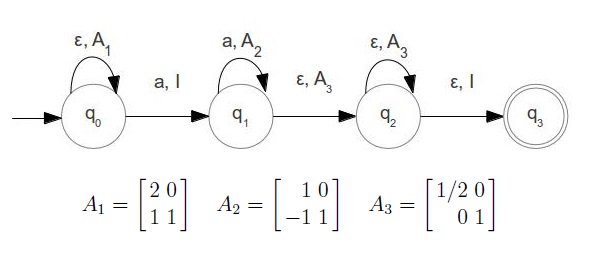}
		\caption{State diagram of $ \mathcal{G} $ recognizing
		\label{fig:upow}
		 $\mathtt{UPOW}=\{a^{2^n}|n\geq 0\}$}
	\end{figure}

	Before reading the first input symbol, the register has the value
	$$\left [
	\begin{array}{cc}
	2^k&0\\
	2^k -1 &1\\
	\end{array}
	\right ]$$
	for some $k\geq 0$ as a result of the multiplications with the matrix
	$A_1$. Multiplication
	with each $A_2$ leaves $2^k$ unchanged while subtracting 1 from $2^k -
	1$ for each scanned $a$. For a member
	input string of the form $a^{2^k}$, in the successful branch
	$\mathcal{G}$ will multiply its register with
	$A_2$ until reaching the end of the string and the register will have the value
	$$\left [
	\begin{array}{cc}
	2^k&0\\
	0 &1\\
	\end{array}
	\right ].$$
	For the rest of the computation, $\mathcal{G}$ will multiply its
	register with $A_3$ until
	nondeterministically moving to the final state. In the successful
	branch, the register will be equal to the
	identity matrix and $\mathcal{G}$ will end up in the final state
	having successfully read the input string.
	For input strings which are not members of $\mathtt{UPOW}$, either the
	computation will end before reading the
	whole input string or the final state will be reached with the
	register value being different from the
	identity matrix.
	Note that $A_1=B^{-1}A^{-1}$, $A_2=A$ and $A_3=B$, where $A$ and $B$
	are the generators of the group $G_{BS}$ and recall that $G_{BS}$ is
	isomorphic to $BS(1,2)$. Since $ \mathtt{UPOW} $ is a unary nonregular
	language, it is not context-free and we conclude the result.
\end{proof}
\begin{figure}[hbt]
	\centering
	\includegraphics[width=0.9\linewidth]{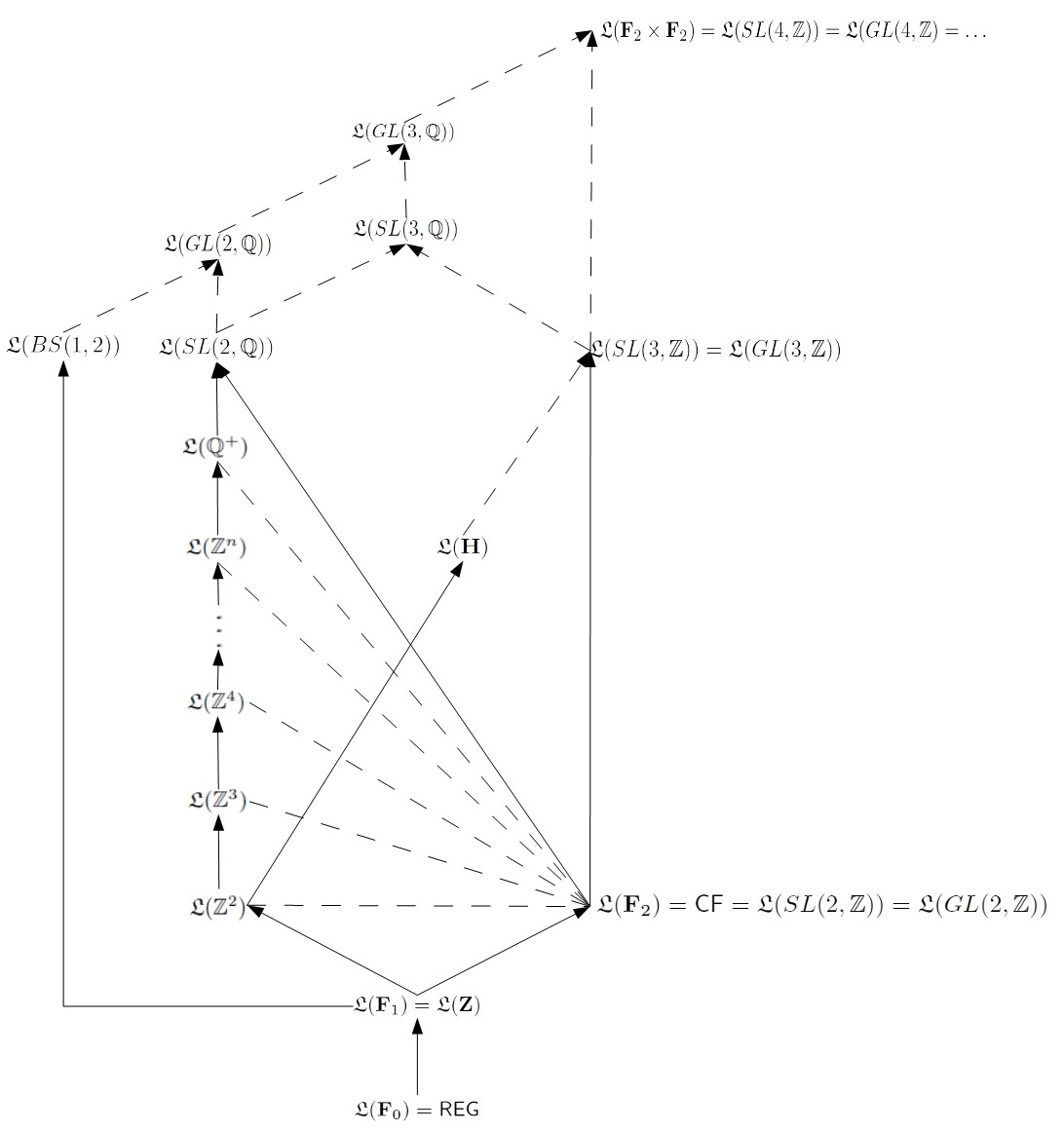}
	\caption{Language classes associated with groups}
	\label{fig: diagram}
\end{figure}

Note that $\mathfrak{L}(\mathbb{Z}) \subsetneq \mathfrak{L}(BS(1,2)) $ 
since the subgroup generated by $ a $ in $ BS(1,2) $ is isomorphic to 
$ \mathbb{Z} $ and $\mathfrak{L} (BS(1,2)) $ contains a unary nonregular language.

We summarize the results in Figure \ref{fig: diagram}. Solid arrows
represent proper inclusion, dashed arrows represent inclusion and
dashed lines represent incomparability. For the relationships which 
are not discussed in this section, please see Section~\ref{Section: appendix}.

\section{Time Complexity}
\label{Section: time}

A group automaton $ \mathcal{G} $ recognizing language $\mathtt{L} $
is said to  \textit{operate in time} $ t(n) $, if for any input string
$ x $ with $ |x|=n $ the computation of $ \mathcal{G} $ takes at most
$ t(n) $ steps. We will denote the set of languages recognized by $ G
$-automata operating in time $ t(n) $ by $ \mathfrak{L}(G)_{t(n)} $.

Let $ X $ be a generator set for $ G $. The \textit{length} of $ g \in
G $, denoted $|g|_X$, is the length of the shortest representative for $
g $ in $ X^* $. The \textit{growth function of a group} $ G $ with
respect to a generating set $ X $, denoted $ g^{X}_G(n)$, is the cardinality 
of the set $ \{g \in G,|g|_X\leq n \} $, that is the number of all elements in 
$ G $ which can be represented by a word of length at most $ n $. 
The growth function is asymptotically independent of the generating set, 
and we will denote the growth function of a group $ G $ by $ g_G(n)$.

For a positive integer $ n $, two strings $ w,w' \in \Sigma^* $ are $
n $-\textit{dissimilar}, if $ |w|\leq n $, $ |w'|\leq n $ and there
exists a distinguishing string $ v \in \Sigma^*  $ with $ |wv|\leq n
$, $ |w'v|\leq n $ and $ wv \in \mathtt{L} $ iff $w'v \notin
\mathtt{L}  $. Let $ N_\mathtt{L}(n) $ be the maximum $ k $ such that
there exist $ k $ distinct strings that are pairwise $n $-dissimilar.
For a string $ w=w_1w_2\dots w_k \in W(G)$, $ w^{-1} $ represents the
word $w_k^{-1}w_{k-1}^{-1}\dots w_1^{-1}  $.

\begin{lemma}\label{lemma: growth}
	Let $ G $ be a finitely generated group with growth function $ g_G(n)
	$. $ N_{W(G)}(n)\geq g_G(\frac{n}{2})$.
\end{lemma}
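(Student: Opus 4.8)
The plan is to use a fooling-set (Myhill--Nerode style) argument tailored to the word problem language. Fix a symmetric generating set $X$ for $G$. For every group element $g$ with $|g|_X \le n/2$, I would choose a geodesic word $w_g \in X^*$ representing $g$, so that $|w_g| = |g|_X \le n/2$. There are exactly $g_G(n/2)$ such elements, and the corresponding words are pairwise distinct as strings (two words representing different group elements cannot coincide, since evaluation in $G$ is well defined). It therefore suffices to show that any two of these words are $n$-dissimilar; this yields a family of $g_G(n/2)$ pairwise $n$-dissimilar strings and hence the claimed lower bound on $N_{W(G)}(n)$.

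The core of the argument is the choice of distinguishing string. Given two distinct elements $g \ne h$ with geodesic representatives $w_g, w_h$, I would take $v = w_g^{-1}$, the formal inverse word. Then $|v| = |w_g| \le n/2$, so $|w_g v| = 2|w_g| \le n$ and $|w_h v| \le |w_h| + |w_g| \le n$, which satisfies the length constraints in the definition of $n$-dissimilarity. Now $w_g v$ evaluates to $g g^{-1} = e$, so $w_g v \in W(G)$, whereas $w_h v$ evaluates to $h g^{-1} \ne e$ (because $g \ne h$), so $w_h v \notin W(G)$. Thus $v$ distinguishes $w_g$ from $w_h$, i.e. $w_g v \in W(G)$ iff $w_h v \notin W(G)$, which is exactly the required condition.

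Combining these observations, the set $\{\, w_g : |g|_X \le n/2 \,\}$ consists of $g_G(n/2)$ distinct strings that are pairwise $n$-dissimilar, so $N_{W(G)}(n) \ge g_G(n/2)$. I expect the only point requiring genuine care to be the length bookkeeping: the factor $1/2$ in the statement is forced precisely by the need to keep $|w_g v| = 2|w_g|$ within $n$, and one must use geodesic (minimal-length) representatives so that membership in the ball of radius $n/2$ translates into the bound $|w_g| \le n/2$. A minor implicit hypothesis is that $X$ is closed under inverses, so that $w_g^{-1}$ is again a string over the alphabet; this is the standard convention for word-problem languages and matches the definition of $w^{-1}$ given just above the lemma.
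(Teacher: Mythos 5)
Your proposal is correct and follows essentially the same argument as the paper: take a shortest representative word for each element of the ball of radius $n/2$, and use the formal inverse word $w_g^{-1}$ as the distinguishing suffix, with the factor $1/2$ handling the length bookkeeping. Your write-up is in fact slightly more explicit than the paper's about why the chosen words are distinct as strings and why both length constraints in the definition of $n$-dissimilarity are met.
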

\begin{proof}
	Let $ X $ be the generator set of $ G $. The strings in $ W(G) $ are
	those which belong to $ (X \cup X^{-1})^* $ and represent the identity
	element of $ G $. Let $ w=w_1\dots w_k $ be a word of length less than
	or equal to $\frac{n}{2} $. Each word of length $ k $ can be extended
	with $w^{-1}= w_k^{-1}\dots w_1^{-1} $ so that the extended word
	represents the identity element of $ G $ and the word has length less
	than or equal to $n $. The number of distinct elements $ g $ in $ G $
	which can be represented by a word of length less than or equal to
	$\frac{n}{2} $ is $ g_G(\frac{n}{2})$, which is the cardinality of the set 
	$\{g \in G,|g|_X\leq \frac{n}{2}\}$. Since the set 
	$ \{g \in G,|g|_X\leq \frac{n}{2}\} $  contains only the shortest 
	representative of each element, any two elements $ g_1, g_2 $ are 
	distinct such that the string $ g_1g_1^{-1} \in W(G)$ whereas the string 
	$ g_2g_1^{-1} \notin W(G)$. We conclude that there are at least 
	$  g_G(\frac{n}{2}) $ distinct strings that are pairwise $n $-dissimilar.
\end{proof}

\begin{theorem}\label{theorem: growth}
	Let $ G $ and $ H $ be groups with polynomial and exponential growth
	functions $ g_G(n)$ and $  g_H(n) $ respectively. $\mathfrak{L}(H)
	\nsubseteq \mathfrak{L}(G)_{t(n)}  $ where $ t(n) $ is a polynomial.
\end{theorem}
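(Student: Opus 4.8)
The plan is to exhibit a single language that lies in $\mathfrak{L}(H)$ but escapes every polynomially time-bounded $G$-automaton, and the natural candidate is the word problem $W(H)$ itself. First I would observe that $W(H)\in\mathfrak{L}(H)$: the $H$-automaton that reads each input symbol $x\in X\cup X^{-1}$, multiplies its register by the corresponding group element, and accepts exactly when the register returns to the identity recognizes precisely the words over the generators that represent $e$, i.e. $W(H)$. It then suffices to prove $W(H)\notin\mathfrak{L}(G)_{t(n)}$ for every polynomial $t(n)$. I would argue by contradiction, assuming a $G$-automaton $\mathcal{A}=(Q,\Sigma,G,\delta,q_0,Q_a)$ recognizes $W(H)$ within time $t(n)$, and then count reachable configurations.

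The counting has two sides. On the one hand, a configuration of $\mathcal{A}$ is a pair (control state, register value) in $Q\times G$. Let $S\subseteq G$ be the finite set of group elements occurring in $\delta$ and put $C=\max_{s\in S}|s|_X$, where $X$ generates $G$. Along any computation of length at most $t(n)$ the register is a product of at most $t(n)$ elements of $S$, hence has $X$-length at most $C\,t(n)$; by the definition of the growth function there are at most $g_G(C\,t(n))$ such register values, so at most $|Q|\cdot g_G(C\,t(n))$ configurations are reachable within time $t(n)$, which is polynomial in $n$ because $g_G$ and $t$ are polynomial. On the other hand, I would invoke the construction behind Lemma~\ref{lemma: growth}: the shortest representatives $g_1,\dots,g_k$ of the $k=g_H(n/2)$ distinct elements of length at most $n/2$ satisfy $g_ig_i^{-1}\in W(H)$ for every $i$ and $g_jg_i^{-1}\notin W(H)$ for every $i\neq j$. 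For each $i$ I fix an accepting computation of $\mathcal{A}$ on $g_ig_i^{-1}$ and let $c_i$ be its configuration immediately after the prefix $g_i$ has been read; since $|g_ig_i^{-1}|\le n$, this $c_i$ is reached within $t(n)$ steps.

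The main obstacle is that $G$-automata are nondeterministic, so a bare Myhill--Nerode count of dissimilar strings does not transfer; the cure is that the strings supplied by Lemma~\ref{lemma: growth} form a genuine fooling set. Concretely, writing $u_j=g_j$ and $v_i=g_i^{-1}$, if $c_i=c_j$ for some $i\neq j$ then reading $u_j$ already reaches $c_j=c_i$, from which reading $v_i$ drives $\mathcal{A}$ into acceptance (because $c_i$ was taken on an accepting run for $g_ig_i^{-1}$); concatenating these two parts yields an accepting computation for $g_jg_i^{-1}$, contradicting $g_jg_i^{-1}\notin W(H)$. Hence $c_1,\dots,c_k$ are pairwise distinct, forcing $g_H(n/2)=k\le|Q|\cdot g_G(C\,t(n))$. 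As $n$ grows this pits an exponential lower bound against a polynomial upper bound, which fails for all large $n$; the contradiction shows $W(H)\notin\mathfrak{L}(G)_{t(n)}$ and therefore $\mathfrak{L}(H)\nsubseteq\mathfrak{L}(G)_{t(n)}$, as required.
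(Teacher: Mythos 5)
Your proposal is correct and follows essentially the same route as the paper: recognize that $W(H)\in\mathfrak{L}(H)$, bound the number of reachable configurations of a time-$t(n)$ $G$-automaton by $|Q|\cdot g_G(C\,t(n))$ using the polynomial growth of $G$, and then use the exponentially many pairwise $n$-dissimilar prefixes supplied by Lemma~\ref{lemma: growth} as a fooling set to force two distinguishable prefixes into the same configuration. Your explicit handling of nondeterminism (fixing an accepting run for each $g_ig_i^{-1}$ and comparing the configurations reached after the prefixes) is a slightly more careful rendering of the same pigeonhole step the paper performs.
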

\begin{proof}
	Let $ \mathcal{G} $ be a $ G $-automaton recognizing a language $ L $
	in time $ t(n) $.  A configuration of a group automaton consists of a state
	and a group element pair. Let us count the number of distinct configurations
	of $ \mathcal{G} $ that can be reached after reading a string of
	length at most $ m $. Since the number of states is constant, we
	will analyze the number of different group elements that can appear in
	the register. After reading a string of length exactly $ m $, the product of
	the labels on the edges can be given by
	$$ l=g_{i_1}g_{i_2}\dots g_{i_{k}} $$ for some $ k \leq t(m) $. $ l $ can be 
	expressed as a product of $ \kappa $ generators, where $ \kappa $ is at 
	most $C\cdot k $ for some constant $ C $, since each group
	element is composed of at most some constant number of
	generators, which is independent of the length of the string. The number of 
	elements in $ G $ which can be represented as a product of at most $ \kappa$ 
	generators is given by $ g_G(\kappa) $ by the definition of the growth function 
	of $ G $. Hence, the number of different values that can appear in the register 
	after reading a string of length exactly $ m $ is less than or equal to 
	$ g_G(\kappa) $. Since  $ \kappa \leq  C\cdot k $ and $ k \leq t(m) $, we can 
	conclude that $$ g_G(\kappa)  \leq  g_G(C \cdot t(m)). $$
	
	$ g_G(C \cdot t(m)) $ is a polynomial function of $ m $ since both the growth 
	function of $ G $ and $ t(m) $ have polynomial growth. Now it is easy to see 
	that the number of different group elements that can appear in the register after
	reading a string of length \textit{at most} $ m $, it is still a polynomial function of
	$ m $.  
	
	Now let us consider the strings in $ W(H) $. $g_H(\frac{n}{2}) \leq
	N_{W(H)}(n)$ by Lemma \ref{lemma: growth}. Hence, there are at least $
	g_H(\frac{n}{2}) $ many distinct $ x_i ,x_j$ such that $ x_iy
	\in W(H)$ whereas $ x_jy \notin W(H) $ and $ |x_iy|\leq
	n $ and $ |x_jy|\leq n $ for some $ y $ . Let us consider the accepting
	computation paths for these strings. After finishing reading the $ x_i
	$, a configuration should be reached which eventually leads to an
	accept state. Since the total number of distinct configurations after
	reading a prefix of length at most $ n$ is a polynomial function of $
	n $, the number of configurations which leads to an accept state is
	also a polynomial function of $ n $. We can conclude that the same
	configuration should be reached after reading two distinct strings $
	x_i $ and $ x_j $ since there are at least $ g_H(\frac{n}{2})  $ many
	such different strings which is an exponential function of $ n $. We
	have assumed that the configuration in consideration leads to an
	accept state, that is an accepting configuration is reached if the
	string $ x_i $ is extended with $ y $. This will result in the
	acceptance of the string $ x_jy $, which is not a member of $
	W(H) $ since otherwise $x_i $ would be equivalent to  $x_j $. We
	arrive at a contradiction and conclude that $W(H) $ cannot be
	recognized by a $ G $-automaton in polynomial time.
	
	We conclude that $\mathfrak{L}(H) \nsubseteq \mathfrak{L}(G)_{t(n)}  $
	since $ W(H) $ is trivially in $ \mathfrak{L}(H) $.
\end{proof}

\begin{theorem}\label{thm: polycf}
	Let $ G $ be a group with a polynomial growth function. There exists a
	context-free language which cannot be recognized by any $ G
	$-automaton in polynomial time.
\end{theorem}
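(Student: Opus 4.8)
The plan is to produce a single context-free language that is simultaneously the word problem of a group with exponential growth, and then hand the work to Theorem~\ref{theorem: growth}. The obvious candidate is $W(\mathbf{F}_2)$, the word problem of the free group on two generators, since $\mathbf{F}_2$ is the simplest group that is both exponentially growing and has a context-free word problem. First I would record the growth estimate for $\mathbf{F}_2$: with respect to the standard generating set, the number of freely reduced words of length exactly $k$ is $4\cdot 3^{k-1}$ for $k\geq 1$, so $g_{\mathbf{F}_2}(n)$ is bounded below by an exponential function of $n$. This qualifies $\mathbf{F}_2$ to play the role of $H$ in Theorem~\ref{theorem: growth}.

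Second, I would check that $W(\mathbf{F}_2)$ is context-free, which is immediate from what has already been set up. The word problem of any finitely generated group is recognized by the group automaton that multiplies the register by the generator corresponding to each scanned symbol and accepts exactly when the register returns to the neutral element; hence $W(\mathbf{F}_2)\in\mathfrak{L}(\mathbf{F}_2)$. By Fact~\ref{fact: cf} we have $\mathfrak{L}(\mathbf{F}_2)=\mathsf{CF}$, so $W(\mathbf{F}_2)$ is context-free. (Externally, the Muller--Schupp theorem explains why virtually free groups are the right reservoir of such examples, but the self-contained route through Fact~\ref{fact: cf} suffices here.)

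Finally, I would invoke Theorem~\ref{theorem: growth} with $H=\mathbf{F}_2$, which has exponential growth, and with the given group $G$ of polynomial growth. The conclusion $\mathfrak{L}(\mathbf{F}_2)\nsubseteq\mathfrak{L}(G)_{t(n)}$ for every polynomial $t(n)$ is witnessed, inside the proof of that theorem, precisely by the string set $W(H)=W(\mathbf{F}_2)$: the counting argument there establishes that $W(\mathbf{F}_2)$ cannot be recognized by a $G$-automaton operating in polynomial time. Combining this with the context-freeness obtained above yields a context-free language outside $\mathfrak{L}(G)_{t(n)}$ for all polynomial $t(n)$, which is the claim.

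Because the statement is essentially a specialization of Theorem~\ref{theorem: growth}, there is no substantial obstacle beyond selecting the right group; the only delicate requirement is that $H$ must jointly satisfy \emph{exponential growth} and \emph{context-free word problem}, and $\mathbf{F}_2$ is exactly the minimal choice meeting both constraints.
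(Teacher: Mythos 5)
Your proposal is correct and follows essentially the same route as the paper: both take $W(\mathbf{F}_2)$ as the witness, note that $\mathbf{F}_2$ has exponential growth while $W(\mathbf{F}_2)$ is context-free, and apply Theorem~\ref{theorem: growth} with $H=\mathbf{F}_2$. Your additional details (the explicit count of freely reduced words and the derivation of context-freeness from Fact~\ref{fact: cf}) merely flesh out steps the paper states as known.
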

\begin{proof}
	It is known that the word problem of the free group of rank $
	W(\mathbf{F}_2) $ has an exponential growth function. Assuming that  $
	G $ is a group with polynomial growth function, $ W(\mathbf{F}_2) $
	cannot be recognized by any $ G $-automaton in polynomial time by
	Theorem \ref{theorem: growth}. Since $ W(\mathbf{F}_2) $ is a
	context-free language, the proof is complete.
\end{proof}
\begin{corollary}
	$ \mathsf{CF} \nsubseteq \mathfrak{L}(\mathbf{\mathbf{H}})_{Poly} $.
\end{corollary}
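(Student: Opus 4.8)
The plan is to derive this corollary directly from Theorem~\ref{thm: polycf}, whose hypothesis requires only that the group under consideration have a polynomial growth function. Thus the entire task reduces to verifying that the discrete Heisenberg group $\mathbf{H}$ has polynomial growth, after which the statement follows by a one-line specialization.

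To establish polynomial growth, I would first recall that $\mathbf{H}$ is a finitely generated nilpotent group. Indeed, it is nilpotent of class $2$: from its presentation $\langle a,b \mid ab=bac,\, ac=ca,\, bc=cb\rangle$ the commutator $c=a^{-1}b^{-1}ab$ is central, so $[\mathbf{H},[\mathbf{H},\mathbf{H}]]$ is trivial. By the classical theorem of Wolf and Bass (which predates Gromov's characterization of groups of polynomial growth as the virtually nilpotent ones), every finitely generated nilpotent group has polynomial growth; more precisely, the Bass--Guivarc'h formula gives that $\mathbf{H}$ has growth of degree exactly $4$. One can also see this concretely: using the normal form $b^j a^i c^k$, the element with parameters $(i,j,k)$ has word length on the order of $|i|+|j|+\sqrt{|k|}$ in the generating set $\{a,b\}$, so the number of elements of length at most $n$ is $\Theta(n^4)$, which is polynomial.

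With the polynomial growth of $\mathbf{H}$ in hand, I would apply Theorem~\ref{thm: polycf} with $G=\mathbf{H}$. That theorem furnishes a context-free language --- namely the word problem $W(\mathbf{F}_2)$, which is context-free and has exponential growth --- that cannot be recognized by any $\mathbf{H}$-automaton in polynomial time. Hence $W(\mathbf{F}_2)\in\mathsf{CF}$ while $W(\mathbf{F}_2)\notin\mathfrak{L}(\mathbf{H})_{Poly}$, which immediately yields $\mathsf{CF}\nsubseteq\mathfrak{L}(\mathbf{H})_{Poly}$.

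The only real content lies in the growth estimate for $\mathbf{H}$, and even that reduces to citing a well-known result rather than confronting a genuine obstacle; the remainder is purely a specialization of the preceding theorem. The one point to be careful about is confirming that $\mathbf{H}$, realized as a concrete matrix subgroup of $SL(3,\mathbb{Z})$, is finitely generated --- which it is, by the two generators $a$ and $b$ --- so that its growth function is well defined and the hypothesis of Theorem~\ref{thm: polycf} is genuinely satisfied.
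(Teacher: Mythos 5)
Your proposal is correct and follows exactly the same route as the paper: note that the discrete Heisenberg group $\mathbf{H}$ has polynomial growth and apply Theorem~\ref{thm: polycf}. The paper simply cites the polynomial growth of $\mathbf{H}$ as known, whereas you additionally justify it via nilpotency and the Bass--Guivarc'h degree-$4$ estimate, which is a harmless elaboration of the same argument.
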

\begin{proof}
	It is known that the Discrete Heisenberg group $ \mathbf{H}$ has
	polynomial growth function. The result follows by Theorem \ref{thm: polycf}.
\end{proof}

\section{Additional Results} \label{Section: appendix}
	
	In this section, we are going to state some known results from the
	literature and some new results which help us complete Figure
	\ref{fig: diagram}.
	
	The relation between the classes of languages recognized by free group
	automata is summarized as follows.
	
\begin{fact}\textup{\cite{DM00}}		
		$ \mathsf{REG} = \mathfrak{L}(\mathbf{F}_0) \subsetneq
		\mathfrak{L}(\mathbf{F}_1) = \mathfrak{L}(\mathbb{Z}) \subsetneq
		\mathfrak{L}(\mathbf{F}_2) $.
\end{fact}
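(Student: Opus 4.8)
The plan is to prove the two equalities and the two strict inclusions separately, reducing each to facts already on hand. The equality $\mathsf{REG} = \mathfrak{L}(\mathbf{F}_0)$ is exactly the characterization of the regular languages by trivial-group automata cited earlier from \cite{DM00}: over $\mathbf{F}_0 = \{e\}$ every transition writes $e$, so the register stays permanently at $e$, the final ``register $=e$'' test is vacuous, and an $\mathbf{F}_0$-automaton is literally an NFA. The equality $\mathfrak{L}(\mathbf{F}_1) = \mathfrak{L}(\mathbb{Z})$ is immediate from the isomorphism $\mathbf{F}_1 \cong \mathbb{Z}$, since isomorphic registers yield identical automaton models.

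For the two inclusions I would use the general principle that a subgroup embedding $H \leq G$ induces a language-class inclusion $\mathfrak{L}(H) \subseteq \mathfrak{L}(G)$: an $H$-automaton is reread as a $G$-automaton whose register values happen to lie in the embedded copy of $H$, and because the embedding is an injective homomorphism the product of the labels is $e$ in $H$ iff it is $e$ in $G$, so the accepted language is unchanged. Applying this to $\mathbf{F}_0 \leq \mathbf{F}_1$ and to the copy of $\mathbf{F}_1$ sitting inside $\mathbf{F}_2$ as the subgroup $\langle a\rangle$ generated by one free generator yields both $\subseteq$ relations at once; in particular $\mathfrak{L}(\mathbb{Z}) \subseteq \mathfrak{L}(\mathbf{F}_2) = \mathsf{CF}$ by Fact \ref{fact: cf}.

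It then remains to show that each inclusion is strict. For $\mathfrak{L}(\mathbf{F}_0) \subsetneq \mathfrak{L}(\mathbb{Z})$ I would exhibit one non-regular witness in $\mathfrak{L}(\mathbb{Z})$: the language $\{a^n b^n \mid n \geq 0\}$ is recognized by a $\mathbb{Z}$-automaton (equivalently a blind one-counter automaton) that adds $1$ for each $a$, subtracts $1$ for each $b$, uses its finite control to enforce the pattern $a^*b^*$, and accepts precisely when the register returns to $0$; non-regularity follows from the pumping lemma. For $\mathfrak{L}(\mathbb{Z}) \subsetneq \mathfrak{L}(\mathbf{F}_2)$, since $\mathfrak{L}(\mathbf{F}_2) = \mathsf{CF}$, I must produce a context-free language that no $\mathbb{Z}$-automaton recognizes; the natural candidate is the one-sided Dyck language $D_1$ of balanced parentheses (or, equally well, the palindrome language over a binary alphabet).

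The genuinely hard step is this last non-membership claim, and it is where I expect the real work to lie. The difficulty is that a single blind counter can enforce only one linear relation among the symbol counts, and, crucially, its updates commute, so the counter cannot ``see'' the nesting/ordering structure that separates $D_1$ from the merely count-balanced language $\{w : \#( = \#)\}$. I would make this precise with a swapping/pumping argument: from a hypothetical accepting $\mathbb{Z}$-automaton for $D_1$ one extracts, for sufficiently long inputs, two transition blocks that can be interchanged without altering either the states reached (by finiteness of $Q$) or the net register contribution (by commutativity of $\mathbb{Z}$), producing an accepted word that violates the prefix-balance condition, a contradiction. Alternatively one may invoke the known structural and closure theory of blind one-counter languages to separate them from $\mathsf{CF}$. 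Chaining the four pieces then gives the full chain of the statement.
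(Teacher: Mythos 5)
The paper does not prove this statement at all: it is imported verbatim as a Fact from \cite{DM00}, so any self-contained argument is necessarily a different route. Your decomposition is the natural one and four of its five pieces are sound: $\mathfrak{L}(\mathbf{F}_0)=\mathsf{REG}$ because the register is frozen at $e$ and the acceptance test is vacuous; $\mathbf{F}_1\cong\mathbb{Z}$ gives the second equality; the subgroup-embedding principle ($H\leq G$ implies $\mathfrak{L}(H)\subseteq\mathfrak{L}(G)$, since an injective homomorphism preserves and reflects being the identity) gives both inclusions; and $\{a^nb^n \mid n\geq 0\}$ witnesses the first strictness. What your argument buys over the paper's citation is that everything except the last step is elementary and self-contained.

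The last step is where the gap is, and your sketch of the swapping argument does not yet close it. Two concrete problems. First, interchanging two transition blocks $u$ and $v$ in a run $x\,u\,y\,v\,z$ preserves the run only if $u$ and $v$ label paths between the \emph{same} pair of states $p\rightarrow q$; finiteness of $Q$ gives you repeated states inside each block of a word like $({}^N)^N$, i.e.\ loops, but not automatically a $($-block and a $)$-block with matching endpoints, so a pigeonhole count over state pairs (or a pumping formulation with loops) has to be set up carefully. Second, and more seriously, even a legal swap need not leave $D_1$: the register of a $\mathbb{Z}$-automaton can be exactly the height function $\#(\,-\,\#)$, in which case the rearranged word is still count-balanced, and whether it violates the prefix condition depends on the sizes and positions of the swapped blocks; for $({}^N)^N$ the natural choices can fail to produce a counterexample. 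The clean repair is to use the witness the paper itself relies on elsewhere: $\{a^nb^n \mid n\geq 0\}^*$ is context-free but is not recognized by any $\mathbb{Z}^n$-automaton by Greibach's theorem \cite{Gr78} (quoted in Section~\ref{Section: appendix}), which immediately gives $\mathfrak{L}(\mathbb{Z})\subsetneq\mathfrak{L}(\mathbf{F}_2)=\mathsf{CF}$ via Fact~\ref{fact: cf}. If you want to keep the argument self-contained, the swap/pumping analysis must be carried out for the iterated language (or for a word such as $({}^N)^N({}^N)^N$ where moving a $)$-block ahead of a $($-block forces a prefix violation), not merely asserted.
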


The following result states the hierarchy between the classes of languages 
recognized by $ \mathbb{Z}^n $-automata. This result also follows from the
hierarchy between the class of languages recognized by nondeterministic 
blind $ k $-counter automata.

	\begin{fact}
		$ \mathfrak{L}(\mathbb{Z}^k) \subsetneq
		\mathfrak{L}(\mathbb{Z}^{k+1})$ for $ k \geq 1 $.
	\end{fact}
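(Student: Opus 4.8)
The plan is to treat the inclusion and its properness separately, exploiting the equivalence noted earlier between $\mathbb{Z}^k$-automata and nondeterministic blind $k$-counter automata. The inclusion $\mathfrak{L}(\mathbb{Z}^k)\subseteq\mathfrak{L}(\mathbb{Z}^{k+1})$ is immediate by direct simulation: a $\mathbb{Z}^{k+1}$-automaton runs any given $\mathbb{Z}^k$-automaton on its first $k$ coordinates while never altering the $(k+1)$-st, so the final register is the zero vector of $\mathbb{Z}^{k+1}$ exactly when the simulated register is the zero vector of $\mathbb{Z}^k$. Note that here $\mathbb{Z}^k$ has \emph{infinite} index in $\mathbb{Z}^{k+1}$, so Fact~\ref{fact: finite} does not apply and the inclusion is genuinely by simulation, not by an index argument.

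For properness I would exhibit the witness $E_{k+1}=\{a_1^{n_1}b_1^{n_1}a_2^{n_2}b_2^{n_2}\cdots a_{k+1}^{n_{k+1}}b_{k+1}^{n_{k+1}}\mid n_1,\dots,n_{k+1}\geq 0\}$ over the alphabet $\{a_1,b_1,\dots,a_{k+1},b_{k+1}\}$. That $E_{k+1}\in\mathfrak{L}(\mathbb{Z}^{k+1})$ is straightforward: the finite control verifies the regular shape $a_1^\ast b_1^\ast\cdots a_{k+1}^\ast b_{k+1}^\ast$, while coordinate $i$ of the register is increased by $1$ for each $a_i$ and decreased by $1$ for each $b_i$, so acceptance at the zero vector enforces exactly the $k+1$ independent equalities $n_i^{(a)}=n_i^{(b)}$.

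The heart of the argument is showing $E_{k+1}\notin\mathfrak{L}(\mathbb{Z}^k)$, and this is the step I expect to be the main obstacle. I would model a hypothetical $\mathbb{Z}^k$-automaton $M$ as a finite directed graph on its state set whose edges carry an input label and a weight vector in $\mathbb{Z}^k$; because the register is \emph{blind}, never influencing which transitions are available, a string is accepted precisely when some labelled path from $q_0$ to an accepting state spells it and has total weight $0\in\mathbb{Z}^k$. Fixing a member string with all exponents large, the accepting path splits into the $2(k+1)$ phases reading $a_1,b_1,\dots,a_{k+1},b_{k+1}$; pigeonhole over the finitely many states within each phase yields a pumpable simple cycle, with net weight vectors $v_1,\dots,v_{2(k+1)}\in\mathbb{Z}^k$ recording the marginal weight of lengthening that phase. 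Pumping these cycles by amounts $(s_1,\dots,s_{2(k+1)})$ changes the exponents accordingly and changes the total weight by $\sum_j s_j v_j$; preserving acceptance requires $\sum_j s_j v_j=0$, a homogeneous system of only $k$ equations in $2(k+1)$ unknowns, whose solution space has dimension at least $k+2$. Since the \emph{balanced} solutions (those pumping the $a_i$- and $b_i$-phases of each block equally) form a subspace of dimension at most $k+1$, an unbalanced solution must exist; pumping along it produces an accepted word violating some $n_i^{(a)}=n_i^{(b)}$, hence outside $E_{k+1}$, a contradiction.

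The delicate points, which I would treat carefully rather than wave through, are: an integer solution of the homogeneous system may have negative entries, so I must start from a base word containing enough copies of each cycle that cycles can be \emph{removed} as well as inserted, keeping all pump counts realizable and all exponents nonnegative; the cycles extracted in the various phases must be combinable into one genuine accepting path of $M$ rather than into separate runs, which forces the base exponents to be large enough for pigeonhole to apply simultaneously in every phase; and the perturbed word must still respect the shape $a_1^\ast b_1^\ast\cdots a_{k+1}^\ast b_{k+1}^\ast$. An alternative, more algebraic route I would keep in reserve invokes semilinearity: $E_{k+1}$ intersected with its regular shape has Parikh image cut out by the $k+1$ independent equations $n_i^{(a)}=n_i^{(b)}$, whereas the accepted Parikh set of any $\mathbb{Z}^k$-automaton is a finite union of linear sets each constrained only by the single $\mathbb{Z}^k$-valued condition ``total weight $=0$'' and so of codimension at most $k$; comparing codimensions separates the classes. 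In either form the crux is the rank bound that $k$ counters impose at most $k$ independent linear constraints, which cannot accommodate the $k+1$ parallel comparisons built into $E_{k+1}$.
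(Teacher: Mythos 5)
The paper does not actually prove this Fact: it imports it from the known hierarchy theorem for nondeterministic blind $k$-counter automata (Greibach \cite{Gr78}), via the equivalence between $\mathbb{Z}^k$-automata and blind $k$-counter automata stated in Section 3. Measured against that, your inclusion argument, your witness language $E_{k+1}$, and your central idea --- that $k$ blind counters impose only $k$ homogeneous linear constraints on the pump multiplicities, which cannot enforce the $k+1$ independent balances of $E_{k+1}$ --- are sound and are essentially the classical route; you are also right that Fact \ref{fact: finite} is unavailable here because the index is infinite.

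However, the realizability step you flag as ``delicate'' is a genuine gap, not a routine detail. Dimension counting yields an integer vector $u$ in the solution space $V$ of $\sum_j s_jv_j=0$ but outside the balanced subspace $B$; yet $u$ generally has negative entries, and $V\cap\mathbb{Z}_{\geq 0}^{2(k+1)}$ can be entirely contained in $B$ (it can even reduce to $\{0\}$, e.g.\ when all $v_j$ lie in an open half-space), so insertion alone proves nothing. Removal, in turn, requires the base accepting path to contain in each phase many pairwise disjoint cycles that are \emph{identical in weight and length}; the pigeonhole you invoke only produces cycles based at a common state, and in the presence of $\varepsilon$-transitions carrying nonzero weights these cycles can have pairwise distinct and unbounded weights, so a second pigeonhole does not close the argument. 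This is repairable (by normalizing the $\varepsilon$-behaviour first, or by a more careful selection of cycles), but as written the proof is incomplete exactly at its crux. Your fallback ``codimension'' argument is not salvageable in the stated form: it is false that the Parikh image of a language recognized with $k$ blind counters is a union of linear sets of codimension at most $k$ --- already the regular language $(a_1b_1)^*(a_2b_2)^*\cdots(a_{k+1}b_{k+1})^*$, recognized with no counters at all, has Parikh image $\{(n_1,n_1,\dots,n_{k+1},n_{k+1})\}$ of codimension $k+1$, because periods arising from automaton cycles can themselves correlate letter counts. Given these subtleties, either carry out the cycle-normalization in full or simply cite the blind-counter hierarchy of \cite{Gr78}, as the paper does.
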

	
	At the top of the hierarchy of $ \mathbb{Z}^k $, we place $
	\mathbb{Q}^+ $, which is isomorphic to a free Abelian group of
	infinite rank. Let us note that the set of languages recognized by $
	\mathbb{Q}^+ $-automata is a proper subset of the set of languages
	recognized by $ SL(2,\mathbb{Q}) $-automata which can be concluded 
	with the help of the following fact.
	
	\begin{fact}\textup{\cite{ISK76}}\label{fact: unary}
		All \textup{1NFAMW}-recognizable languages over a unary alphabet are regular.
	\end{fact}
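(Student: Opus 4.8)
The plan is to reduce the statement to a semilinearity argument about the set of register values realizable along accepting computations. Since a 1NFAMW is equivalent to a $\mathbb{Q}^+$-automaton, fix such an automaton $\mathcal{E}$ over the unary alphabet $\{a\}$. Only finitely many multipliers occur on the transitions of $\mathcal{E}$, so only finitely many primes, say $p_1,\dots,p_k$, appear in their numerators and denominators. Under the isomorphism of $\mathbb{Q}^+$ with the free Abelian group on the primes (via prime factorization), every multiplier $\mathcal{E}$ can ever write lies in the subgroup generated by $p_1,\dots,p_k$, which is isomorphic to $\mathbb{Z}^k$. Hence I may regard $\mathcal{E}$ as a $\mathbb{Z}^k$-automaton, i.e. a blind $k$-counter machine, each of whose transitions is labeled by a pair $(\beta,v)\in\{0,1\}\times\mathbb{Z}^k$ recording whether an $a$ is consumed and the increment added to the register.

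First I would package the accepting computations as walks in a finite labeled directed graph. Let $S\subseteq\mathbb{N}\times\mathbb{Z}^k$ be the set of pairs $(n,v)$ such that some walk from $q_0$ to an accepting state consumes exactly $n$ input symbols and accumulates register value $v$. The string $a^n$ is accepted by $\mathcal{E}$ precisely when $(n,0)\in S$, so the recognized language is $\{a^n : (n,0)\in S\}$, and it suffices to show that $\{n : (n,0)\in S\}$ is an ultimately periodic subset of $\mathbb{N}$.

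The key step is that $S$ is a semilinear subset of the finitely generated commutative group $\mathbb{Z}^{k+1}$. This follows from the Eilenberg--Schützenberger theorem: the set of accepting edge-sequences is a regular language, its image under the label homomorphism into $\mathbb{Z}^{k+1}$ is therefore a rational subset, and rational subsets of a finitely generated commutative group are semilinear. (Equivalently one may read $S$ off as a Parikh image and invoke Parikh's theorem, taking care that the counter coordinates range over $\mathbb{Z}$ rather than $\mathbb{N}$.) The register-zero condition cuts out $S\cap(\mathbb{Z}\times\{0\}^k)$, the intersection of a semilinear set with a subgroup, which is again semilinear; projecting onto the first coordinate preserves semilinearity, so $\{n : (n,0)\in S\}$ is a semilinear, hence ultimately periodic, subset of $\mathbb{N}$. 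An ultimately periodic set of exponents defines a regular unary language, completing the argument.

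I expect the main obstacle to be the semilinearity claim for $S$, and in particular the bookkeeping forced by the fact that the counter values live in $\mathbb{Z}^k$ (they may go negative) while the input length lives in $\mathbb{N}$. Parikh's theorem in its usual form concerns $\mathbb{N}$-valued images, so I would either invoke the commutative-group version (Eilenberg--Schützenberger) directly, or split each counter coordinate into its positive and negative parts to stay in an $\mathbb{N}$-valued setting before recombining. Everything downstream---closure of semilinear sets under intersection with a subgroup and under projection, and the equivalence of ultimate periodicity with regularity for unary languages---is routine.
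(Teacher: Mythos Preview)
The paper does not prove this statement at all: it is recorded as a \emph{Fact} imported from \cite{ISK76} and used as a black box (to rule out unary nonregular languages in $\mathfrak{L}(\mathbb{Q}^+)$ and $\mathfrak{L}(\mathbb{Z}^n)$). So there is no ``paper's own proof'' to compare against; your proposal is supplying an argument the paper deliberately outsources.

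That said, your outline is sound and is essentially the standard route to this result. The reduction from a 1NFAMW/$\mathbb{Q}^+$-automaton to a $\mathbb{Z}^k$-automaton via the finitely many primes occurring on transitions is exactly the observation the paper itself makes informally (``$\mathbb{Q}^+$ is a free Abelian group of infinite rank'' together with finiteness of the transition table). The semilinearity step is the heart of the matter, and you have identified the right tool: the set of accepting edge-sequences is a regular language over a finite alphabet, and its image under the label homomorphism into the commutative group $\mathbb{Z}^{k+1}$ is a rational subset, hence semilinear by Eilenberg--Sch\"utzenberger. Your caveat about Parikh's theorem being stated over $\mathbb{N}$ is well placed; invoking the commutative-group version directly is cleanest. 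The remaining closure properties (intersection with the subgroup $\mathbb{Z}\times\{0\}^k$, projection to the first coordinate, and ultimate periodicity of semilinear subsets of $\mathbb{N}$) are indeed routine. One small point worth making explicit when you write this up: the presence of $\varepsilon$-transitions means computations on a fixed input can be arbitrarily long, but this does not disturb the argument since the set of accepting paths in the finite transition graph is still a regular language over the edge alphabet.
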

	
	\begin{theorem}
		$ \mathfrak{L}(\mathbb{Q}^+) \subsetneq \mathfrak{L}(SL(2,\mathbb{Q}))  $.
	\end{theorem}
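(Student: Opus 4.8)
The plan is to prove the inclusion is proper by establishing two facts:
(1) $\mathfrak{L}(\mathbb{Q}^+) \subseteq \mathfrak{L}(SL(2,\mathbb{Q}))$, and
(2) there is a language in $\mathfrak{L}(SL(2,\mathbb{Q}))$ that is not in
$\mathfrak{L}(\mathbb{Q}^+)$. For the separating language, the natural
candidate is the unary nonregular language
$\mathtt{L}=\{a^{2^{2n+1}} \mid n \geq 0\}$ already shown to lie in
$\mathfrak{L}(SL(2,\mathbb{Q}))$ in the proof of
Theorem~\ref{theorem: UPOWODD}. By Fact~\ref{fact: unary}, every unary
language recognized by a 1NFAMW is regular; since a $\mathbb{Q}^+$-automaton
is precisely a 1NFAMW (as noted in the discussion of $\mathbb{Q}^+$ in
Section~\ref{Section: results}), no unary nonregular language can be in
$\mathfrak{L}(\mathbb{Q}^+)$. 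Hence $\mathtt{L}$ witnesses the strict
separation.

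\emph{The first step} is the inclusion $\mathfrak{L}(\mathbb{Q}^+)
\subseteq \mathfrak{L}(SL(2,\mathbb{Q}))$. The idea is to embed the
multiplicative group $\mathbb{Q}^+$ into $SL(2,\mathbb{Q})$ via a group
homomorphism, so that any $\mathbb{Q}^+$-automaton can be simulated by an
$SL(2,\mathbb{Q})$-automaton using the same state diagram with each register
value $r \in \mathbb{Q}^+$ replaced by its image. A clean embedding sends
$r \mapsto \bigl(\begin{smallmatrix} r & 0 \\ 0 & r^{-1}
\end{smallmatrix}\bigr)$, which has determinant $1$ and respects
multiplication, since
\[
\left[\begin{array}{cc} r & 0 \\ 0 & r^{-1} \end{array}\right]
\left[\begin{array}{cc} s & 0 \\ 0 & s^{-1} \end{array}\right]
=
\left[\begin{array}{cc} rs & 0 \\ 0 & (rs)^{-1} \end{array}\right].
\]
This map is injective, the identity of $\mathbb{Q}^+$ maps to the identity
matrix, and a register value equals $1$ in $\mathbb{Q}^+$ if and only if its
image equals the identity matrix. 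Therefore acceptance is preserved exactly,
and the simulating $SL(2,\mathbb{Q})$-automaton recognizes the same language.

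\emph{The second step} assembles the separation: since $\mathtt{L} \in
\mathfrak{L}(SL(2,\mathbb{Q}))$ by Theorem~\ref{theorem: UPOWODD} but
$\mathtt{L} \notin \mathfrak{L}(\mathbb{Q}^+)$ by Fact~\ref{fact: unary}
(because $\mathtt{L}$ is unary and nonregular), the inclusion from the first
step is proper. \emph{The main obstacle} is largely presentational rather
than mathematical: one must confirm that the identification of
$\mathbb{Q}^+$-automata with the 1NFAMW model of Ibarra et al.\ is exact
enough that Fact~\ref{fact: unary} applies directly, so that the
nonregularity of $\mathtt{L}$ genuinely excludes it from
$\mathfrak{L}(\mathbb{Q}^+)$. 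Given the equivalence already asserted in
Section~\ref{Section: results}, this step is immediate, and the proof closes.
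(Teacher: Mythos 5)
Your proposal is correct and follows essentially the same route as the paper: the paper also embeds $\mathbb{Q}^+$ into $SL(2,\mathbb{Q})$ via $s \mapsto \bigl(\begin{smallmatrix} s & 0 \\ 0 & 1/s \end{smallmatrix}\bigr)$ applied to the finitely many register multipliers, and separates using $\mathtt{L}=\{a^{2^{2n+1}} \mid n \geq 0\}$ together with Fact~\ref{fact: unary} and the equivalence of $\mathbb{Q}^+$-automata with 1NFAMW's. No substantive differences.
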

	
	\begin{proof} Let $ \mathtt{L} \in \mathfrak{L}(\mathbb{Q}^+) $ and let $
		\mathcal{G} $ be a $ \mathbb{Q}^+ $-automaton recognizing $ \mathtt{L}
		$. We will construct a $ \mathfrak{L}(SL(2,\mathbb{Q})) $-automaton  $
		\mathcal{G}' $ recognizing $ \mathtt{L} $. Let $ S =\{s_1,\dots,s_n\}
		$ be the set of elements multiplied with the register during the
		computation of $ \mathcal{G} $. We define the mapping $ \varphi $ as
		follows.
		\[ \varphi: s_i
		\mapsto
		\left [
		\begin{array}{cc}
		s_i&0\\
		0&\frac{1}{s_i}\\
		\end{array}
		\right ]~~~
		\]
		The elements $\varphi(s_i) $ are $ 2 \times 2 $ rational matrices with
		determinant 1. Let $ \delta $ and $ \delta' $ be the transition
		functions of $ \mathcal{G} $ and $ \mathcal{G}' $ respectively. We let
		$ (q',s_i) \in \delta(q,\sigma) \iff (q',\varphi(s_i)) \in
		\delta'(q,\sigma)$ for every $ q,q' \in Q$, $ \sigma \in \Sigma $ and
		$ s_i \in S $. The resulting $ \mathcal{G}' $ recognizes $ \mathtt{L}$.
		
		The inclusion is proper since  $\mathtt{L}=\{a^{2^{2n+1}} | n \geq 0
		\} \in \mathfrak{L}(SL(2,\mathbb{Q}))$ by Theorem \ref{theorem:
		UPOWODD}, and $ \mathfrak{L}(\mathbb{Q}^+) $ does not contain any
		unary nonregular languages by Fact \ref{fact: unary}, noting that 
		$ \mathbb{Q}^+ $-automata are equivalent to 1NFAMW's.
	\end{proof}
	
	Let us mention that the class of context-free languages and the class
	of languages recognized by nondeterministic blind counter automata are
	incomparable.
	
	\begin{fact}
		$\mathsf{CF}$ and $\mathfrak{L}(\mathbb{Z}^n)$ are incomparable for
		all $ n \geq 2 $.		
	\end{fact}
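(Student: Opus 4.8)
The plan is to establish incomparability by proving two separate non-inclusions: that some language in $\mathfrak{L}(\mathbb{Z}^n)$ is not context-free, and that some context-free language is not in $\mathfrak{L}(\mathbb{Z}^n)$. Throughout I would use the equivalence, recalled earlier in the paper, that $\mathbb{Z}^n$-automata are precisely nondeterministic blind $n$-counter automata, so the register holds a tuple of $n$ counters that may be incremented or decremented but never inspected until the end of the computation.

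For the direction $\mathfrak{L}(\mathbb{Z}^n)\nsubseteq \mathsf{CF}$ I would exhibit the language $\{a^mb^mc^m : m\ge 0\}$. A blind two-counter automaton recognizes it: the finite control enforces the regular pattern $a^*b^*c^*$, the first counter is incremented on each $a$ and decremented on each $b$, the second is incremented on each $b$ and decremented on each $c$, and the input is accepted exactly when both counters return to the identity, which forces the three blocks to have equal length. Hence this language lies in $\mathfrak{L}(\mathbb{Z}^2)\subseteq \mathfrak{L}(\mathbb{Z}^n)$ for every $n\ge 2$, while it is non-context-free by the pumping lemma for $\mathsf{CF}$. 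This settles the first half.

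For the converse direction $\mathsf{CF}\nsubseteq \mathfrak{L}(\mathbb{Z}^n)$ I would take the linear context-free language $\mathtt{PAL}=\{x\,c\,x^r : x\in\{a,b\}^*\}$ and argue it escapes $\mathfrak{L}(\mathbb{Z}^n)$ for every $n$, via a configuration-counting argument in the exact spirit of Lemma \ref{lemma: growth} and Theorem \ref{theorem: growth}. The key point is that the $2^k$ prefixes $x\,c$ with $x\in\{a,b\}^k$ are pairwise dissimilar, since the suffix $x^r$ distinguishes $x\,c$ from $x'\,c$ whenever $x\neq x'$; thus, fixing one accepting computation per member $x\,c\,x^r$ and recording the configuration $\kappa_x$ reached right after the prefix $x\,c$, a collision $\kappa_x=\kappa_{x'}$ with $x\neq x'$ would make the automaton also accept $x'\,c\,x^r\notin\mathtt{PAL}$, a contradiction. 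To force such a collision I would bound the number of distinct configurations reachable after a prefix of length at most $m$: each reachable register value is a sum of boundedly many generator vectors, one per step, so it lies in a box of radius linear in $m$, and since $\mathbb{Z}^n$ has polynomial growth this yields only $O(m^n)$ register values and hence polynomially many configurations, which is dominated by the exponentially many dissimilar prefixes.

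I expect the main obstacle to be exactly the step that the polynomial-time Theorem \ref{theorem: growth} sidesteps by hypothesis: in the presence of nondeterminism and $\varepsilon$-transitions, a single prefix can already drive the register to unboundedly many values (an $\varepsilon$-loop of nonzero weight can be spun arbitrarily often), so the naive ``polynomially many configurations after a prefix'' bound is false as stated. The plan is to remove this difficulty by a normal-form argument that bounds the number and weight of $\varepsilon$-loops traversed along a chosen accepting computation, so that the prefix $x\,c$ can be assumed to be read in $O(|x|)$ steps; once $\varepsilon$-activity is controlled in this way, the register again lies in a linear-radius box, the polynomial-growth count applies, and the pigeonhole collision goes through. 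Alternatively, one may simply appeal to the classical fact that the class of blind multicounter languages and $\mathsf{CF}$ are incomparable, of which the two witnesses above are the standard instances.
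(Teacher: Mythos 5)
Your first direction coincides exactly with the paper's: $\{a^nb^nc^n\mid n\geq 0\}$ is recognized by a blind two-counter ($\mathbb{Z}^2$-) automaton and is not context-free, which gives $\mathfrak{L}(\mathbb{Z}^n)\nsubseteq\mathsf{CF}$ for all $n\geq 2$. For the converse direction you diverge from the paper: the paper takes the context-free language $\{a^nb^n\mid n\geq 0\}^*$ and simply cites \cite{Gr78} for the fact that it is not accepted by any blind multicounter machine, whereas you propose a self-contained dissimilarity/pigeonhole argument on $\mathtt{PAL}$ in the spirit of Lemma~\ref{lemma: growth} and Theorem~\ref{theorem: growth}. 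Your witness and your splicing argument are sound, and the obstacle you flag is the real one: because of $\varepsilon$-loops of nonzero weight, the configurations occurring after a prefix along accepting computations need not lie in a polynomially bounded box, and the missing step is precisely a quasi-realtime normal form for blind multicounter machines (for every accepted word there is an accepting computation of length linear in the input). That normalization is true but is itself nontrivial --- as sketched (``bound the number and weight of $\varepsilon$-loops'') it is not yet a proof, since deleting an $\varepsilon$-cycle changes the final register value and one must show that enough cycles with cancelling weights can be discarded --- and it is essentially part of the content of the very reference \cite{Gr78} the paper invokes. So either you carry that normalization out, in which case you get a genuinely self-contained and more informative proof than the paper's one-line citation, or you fall back on the classical incomparability fact, in which case your argument collapses to the paper's with a different context-free witness ($\mathtt{PAL}$ instead of $\{a^nb^n\}^*$). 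Either way the statement is correctly established.
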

	
	\begin{proof} Consider the language $\mathtt{L}=\{a^nb^n| n\geq 0\}$ which is a
		context-free language. Since context-free languages are closed under star, 
		$\mathtt{L}^*$ is a context-free language whereas it cannot be recognized 
		by any $\mathbb{Z}^n$-automaton for all $ n \geq 1 $ by
		\cite{Gr78}. On the other hand, the non-context-free language 
		$\mathtt{L}'=\{a^nb^nc^n|n \geq 0\}$ can be recognized by a
		$\mathbb{Z}^2$-automaton.
	\end{proof}
	
	Let us move on to the results about linear groups. The following result
	is a direct consequence of Fact \ref{fact: finite}.
		
	\begin{theorem}	
		$ \mathfrak{L}(SL(3,\mathbb{Z}))=\mathfrak{L}(GL(3,\mathbb{Z})) $.
	\end{theorem}
	
	\begin{proof} Since $ GL(3,\mathbb{Z}) $ is a finitely generated group and $
		SL(3,\mathbb{Z}) $ has finite index in $ GL(3,\mathbb{Z}) $, the
		result follows by Fact \ref{fact: finite}.
	\end{proof}
	
	We have talked about the discrete Heisenberg group $ \textbf{H} $, an
	important subgroup of $ SL(3,\mathbb{Z}) $. Now let us look at a
	subgroup of $\mathbf{H}$ generated by the matrices $B$ and $C$ which
	we will call $\mathbf{G_2}$.
	
	$$ B=\left [	
	\begin{array}{ccc}
	1&0&0\\
		0&1&1\\
		0&0&1
		\end{array}
		\right ]~~~
		C=\left [
		\begin{array}{ccc}
		1&0&1\\
		0&1&0\\
		0&0&1
		\end{array}
		\right ]~~~
		$$	
		$\mathbf{G_2}=\langle B,C |BC=CB\rangle $ is a free Abelian group of
	rank 2 and therefore it is isomorphic to $\mathbb{Z}^2$.
	
	We conclude the following about the language recognition power of $
	\mathbb{Z}^2$ and $\mathbf{H}$.
		
	\begin{theorem}\label{theorem: Z2H}
		$\mathfrak{L}(\mathbb{Z}^2) \subsetneq \mathfrak{L}(\mathbf{H})$.
	\end{theorem}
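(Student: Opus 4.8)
The plan is to establish the two parts of the strict inclusion separately: first the containment $\mathfrak{L}(\mathbb{Z}^2) \subseteq \mathfrak{L}(\mathbf{H})$, and then its strictness by producing a witness language in $\mathfrak{L}(\mathbf{H}) \setminus \mathfrak{L}(\mathbb{Z}^2)$.

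For the containment I would invoke the subgroup that was just exhibited: $\mathbf{G_2} = \langle B, C \rangle$ is a subgroup of $\mathbf{H}$ isomorphic to $\mathbb{Z}^2$. The underlying principle, already used elsewhere in this paper (for instance for $\mathbb{Z} \leq BS(1,2)$ and for the simulation of an $SL(2,\mathbb{Z})$-automaton by an $SL(3,\mathbb{Z})$-automaton), is that whenever $K$ is isomorphic to a subgroup of $G$ one has $\mathfrak{L}(K) \subseteq \mathfrak{L}(G)$: given a $K$-automaton, replace every register element by its image under the embedding $K \hookrightarrow G$; since the embedding is an injective homomorphism it maps the identity to the identity and respects products, so the resulting $G$-automaton reaches the identity of $G$ exactly when the original reached the identity of $K$, hence recognizes the same language. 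Applying this to $\mathbb{Z}^2 \cong \mathbf{G_2} \leq \mathbf{H}$ gives $\mathfrak{L}(\mathbb{Z}^2) \subseteq \mathfrak{L}(\mathbf{H})$.

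For strictness I would take the witness $\mathtt{COMPOSITE} = \{x^{pq} \mid p, q > 1\}$, which lies in $\mathfrak{L}(\mathbf{H})$ by \cite{Re10}. It then suffices to show $\mathtt{COMPOSITE} \notin \mathfrak{L}(\mathbb{Z}^2)$. The cleanest route, reusing the paper's own machinery, is to observe that $\mathbb{Z}^2$ is isomorphic to a subgroup of $\mathbb{Q}^+$ (the latter is free Abelian of infinite rank, so it contains a rank-$2$ free Abelian subgroup, e.g. $\langle 2, 3 \rangle$), whence $\mathfrak{L}(\mathbb{Z}^2) \subseteq \mathfrak{L}(\mathbb{Q}^+)$ by the monotonicity principle above. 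Since $\mathbb{Q}^+$-automata are precisely the 1NFAMW, Fact \ref{fact: unary} forces every unary language in $\mathfrak{L}(\mathbb{Q}^+)$, and therefore every unary language in $\mathfrak{L}(\mathbb{Z}^2)$, to be regular. As $\mathtt{COMPOSITE}$ is unary and nonregular (it is not even context-free), it cannot belong to $\mathfrak{L}(\mathbb{Z}^2)$, and the inclusion is strict.

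I do not anticipate a genuine obstacle here. The only point needing a line of care is the nonregularity of $\mathtt{COMPOSITE}$ as a unary language: a unary language is regular iff its length set is eventually periodic, and the set of composite integers is not, since an eventually periodic length set would force an infinite arithmetic progression consisting entirely of primes. The main design choice is to route the unary-regularity bound for $\mathbb{Z}^2$ through the embedding $\mathbb{Z}^2 \hookrightarrow \mathbb{Q}^+$ so as to reuse Fact \ref{fact: unary} directly, rather than re-deriving the semilinearity of blind-counter languages from scratch.
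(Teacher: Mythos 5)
Your proposal is correct and follows essentially the same route as the paper: containment via the embedding $\mathbb{Z}^2 \cong \mathbf{G_2} \leq \mathbf{H}$, and strictness via the unary nonregular witness $\mathtt{COMPOSITE}$ combined with Fact \ref{fact: unary}. The only difference is that you spell out the step the paper leaves implicit, namely that Fact \ref{fact: unary} applies to $\mathbb{Z}^2$-automata because $\mathbb{Z}^2$ embeds in $\mathbb{Q}^+$; this is a welcome clarification but not a different argument.
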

	\begin{proof}
		Since $\mathbb{Z}^2$ is a subgroup of $\mathbf{H}$,
		$\mathfrak{L}(\mathbb{Z}^2) \subseteq 
		\mathfrak{L}(\mathbf{H}) $ follows. The inclusion is proper since
		$\mathbf{H}$ can recognize the unary nonregular language 
		$\mathtt{COMPOSITE}=\{x^{pq}| p,q >1\}$ by
		\cite{Re10}, which is not possible for any $\mathbb{Z}^n$-automaton by
		Fact \ref{fact: unary}.
	\end{proof}
	
	In \textup{\cite{MS01}}, it is proven that $ \mathbf{F}_2 \times
	\mathbf{F}_2 $-automaton is as powerful as a Turing machine, which
	places  $ \mathbf{F}_2 \times \mathbf{F}_2 $ at the top of the
	language hierarchy of group automata.
	
	\begin{fact}\textup{\cite{MS01}}\label{fact: RE}		
		$ \mathfrak{L} ( \mathbf{F}_2 \times \mathbf{F}_2 )$ is the family of
		recursively enumerable languages.
	\end{fact}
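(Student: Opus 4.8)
The plan is to prove the two inclusions $\mathfrak{L}(\mathbf{F}_2 \times \mathbf{F}_2) \subseteq \mathsf{RE}$ and $\mathsf{RE} \subseteq \mathfrak{L}(\mathbf{F}_2 \times \mathbf{F}_2)$ separately.

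For the first inclusion, I would exploit the fact that the word problem of $\mathbf{F}_2 \times \mathbf{F}_2$ is decidable: an element of a free group can be reduced to its unique normal form by a Turing machine, and in a direct product this is done componentwise. Given an input $w$, a Turing machine enumerates the finite computation paths of the automaton on $w$ in nondecreasing order of length (dovetailing over the $\varepsilon$-moves). For each path that reads all of $w$ and halts in an accepting state, it forms the product of the edge labels and tests whether this product equals the identity, which is decidable; if some path succeeds, the machine accepts. This semi-decides the language, so it lies in $\mathsf{RE}$.

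For the converse, the key observation is that a single $\mathbf{F}_2$-register can serve as a pushdown stack, which is essentially the content of Fact \ref{fact: cf}: pushing a stack symbol is realized by multiplying the register by a generator and popping by multiplying by its inverse, the free reduction in $\mathbf{F}_2$ mirroring the push/pop discipline, while the requirement that the register be the identity at the end forces the stack to be empty upon acceptance. Since the group operation in $\mathbf{F}_2 \times \mathbf{F}_2$ acts componentwise, the two free factors furnish two independent stacks that can be driven simultaneously by the finite control. A two-stack pushdown automaton is well known to be equivalent in power to a Turing machine (one stack stores the tape to the left of the head and the other the tape to its right), so starting from a Turing machine $M$ recognizing a given recursively enumerable language $L$, I would first convert $M$ into an equivalent two-stack machine and then encode each stack in one of the two $\mathbf{F}_2$ components of the register.

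The hard part will be the faithfulness of this simulation, specifically in how the automaton ``reads'' the tops of its stacks, which the group register cannot inspect directly. The device is nondeterminism: whenever the simulated two-stack machine branches on the current stack tops, the group automaton guesses these symbols and immediately multiplies by the corresponding inverse generators to pop them, so that a wrong guess leaves an uncancellable generator in the register. Because reduction in a free group is confluent and the normal form is unique, only those branches whose guesses match the genuine stack contents can terminate with the identity in both components. The delicate point is to verify that this guess-and-check mechanism, carried out independently in the two factors but synchronized through the shared finite-state control, reproduces exactly the accepting computations of the two-stack machine; once this correspondence is established, the two inclusions combine to give $\mathfrak{L}(\mathbf{F}_2 \times \mathbf{F}_2) = \mathsf{RE}$.
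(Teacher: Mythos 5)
The paper does not prove this statement; it is imported verbatim from \cite{MS01}, so there is no internal proof to compare against and your proposal has to stand on its own. Your first inclusion ($\mathfrak{L}(\mathbf{F}_2\times\mathbf{F}_2)\subseteq\mathsf{RE}$) is fine: the transition labels are given as words over a fixed generating set, the word problem of $\mathbf{F}_2\times\mathbf{F}_2$ is decidable by componentwise free reduction, and dovetailing over $\varepsilon$-paths yields a semi-decision procedure.

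The reverse inclusion, however, has a genuine gap, and it sits exactly at the point you defer with ``once this correspondence is established.'' The claim that a wrong guess ``leaves an uncancellable generator in the register'' is false for the simulation as you describe it. The register records only the free reduction of the product of all multiplications, so a ``pop'' of a wrongly guessed symbol $X$ (multiplication by $c(X)^{-1}$) is cancelled by any later ``push'' of $X$ (multiplication by $c(X)$); in particular a pop from an \emph{empty} stack followed by a push of the guessed symbol returns the register to the identity, even though no two-stack machine can execute that step. Since the automaton is blind, it cannot detect emptiness or mismatches at the moment they occur, and the finite control does not remember enough of the stack to rule such runs out. Hence ``register equals identity at the end'' does not imply ``every guess was honest,'' and the simulation as sketched may accept strings outside $L$. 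Confluence of free reduction gives uniqueness of normal forms, not soundness of the guess-and-check discipline.

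To close the gap you either need a coding argument (choose the encodings $c(X)\in\mathbf{F}_2$ and structure the finite control so that one can \emph{prove} that any instruction sequence consistent with the control whose product reduces to the identity is the trace of a genuine accepting computation), or you can bypass the direct simulation entirely: every recursively enumerable language is a homomorphic image $h(L_1\cap L_2)$ of an intersection of two context-free languages; by Fact~\ref{fact: cf} each $L_i$ lies in $\mathfrak{L}(\mathbf{F}_2)$; running the two $\mathbf{F}_2$-automata in parallel on a common input puts $L_1\cap L_2$ in $\mathfrak{L}(\mathbf{F}_2\times\mathbf{F}_2)$; and $\mathfrak{L}(G)$ is closed under homomorphisms because an automaton with $\varepsilon$-moves can read $h(\sigma)$ while simulating a step on $\sigma$. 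The latter route black-boxes all the delicate pop-guessing into the already established Fact~\ref{fact: cf} and is the cleanest way to make your argument rigorous.
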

	
	We make the following observation.
	
	\begin{theorem}
		$\mathsf{RE} =  \mathfrak{L} ( \mathbf{F}_2 \times \mathbf{F}_2 ) =
		\mathfrak{L} (SL(n,\mathbb{Z}) )$ for $ n \geq 4 $.
	\end{theorem}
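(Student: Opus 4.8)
The plan is to sandwich $\mathfrak{L}(SL(n,\mathbb{Z}))$ between $\mathsf{RE}$ from above and $\mathfrak{L}(\mathbf{F}_2 \times \mathbf{F}_2) = \mathsf{RE}$ from below, so that Fact \ref{fact: RE} closes the loop. The lower bound will come from exhibiting $\mathbf{F}_2 \times \mathbf{F}_2$ as a subgroup of $SL(n,\mathbb{Z})$ for every $n \geq 4$, and the upper bound from the observation that a group automaton over a matrix group with integer entries can be simulated by a nondeterministic Turing machine.

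First I would construct the embedding. Recall that $\mathbf{F}_2$ is isomorphic to the subgroup $\mathbf{G}$ of $SL(2,\mathbb{Z})$ generated by $M_a$ and $M_b$, as noted before Fact \ref{fact: finite}. The block-diagonal map
$$(P, Q) \mapsto \left[\begin{array}{cc} P & 0 \\ 0 & Q \end{array}\right]$$
is an injective homomorphism of $SL(2,\mathbb{Z}) \times SL(2,\mathbb{Z})$ into $SL(4,\mathbb{Z})$, since the determinant of the image equals $\det(P)\det(Q) = 1$. Restricting it to $\mathbf{G} \times \mathbf{G}$ therefore embeds $\mathbf{F}_2 \times \mathbf{F}_2$ into $SL(4,\mathbb{Z})$, and padding with an identity block $I_{n-4}$ embeds $SL(4,\mathbb{Z})$, hence $\mathbf{F}_2 \times \mathbf{F}_2$, into $SL(n,\mathbb{Z})$ for every $n \geq 4$. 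Because any $H$-automaton with $H$ a subgroup of $G$ is literally a $G$-automaton (its register values and the neutral element already lie in $G$), we obtain $\mathfrak{L}(\mathbf{F}_2 \times \mathbf{F}_2) \subseteq \mathfrak{L}(SL(n,\mathbb{Z}))$ for all $n \geq 4$.

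For the upper bound I would argue that $\mathfrak{L}(SL(n,\mathbb{Z})) \subseteq \mathsf{RE}$. A nondeterministic Turing machine can simulate an $SL(n,\mathbb{Z})$-automaton by guessing a computation path, maintaining the register as an explicit integer matrix, multiplying at each step, and accepting exactly when an accept state is reached with the register equal to the identity matrix. Since integer matrix multiplication is computable and equality of integer matrices is decidable, the set of accepted strings is semidecidable, so every language in $\mathfrak{L}(SL(n,\mathbb{Z}))$ is recursively enumerable. Combining the two bounds with Fact \ref{fact: RE} gives
$$\mathsf{RE} = \mathfrak{L}(\mathbf{F}_2 \times \mathbf{F}_2) \subseteq \mathfrak{L}(SL(n,\mathbb{Z})) \subseteq \mathsf{RE},$$
and hence equality throughout for all $n \geq 4$.

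The proof is essentially a sandwiching argument, so the only genuine content is the embedding; the mild obstacle is arranging it so that the determinants multiply to $1$ (which forces the block-diagonal form) and so that the image lands in the correct dimension. This also explains the threshold $n \geq 4$: two independent copies of $\mathbf{F}_2 \hookrightarrow SL(2,\mathbb{Z})$ require four dimensions to sit in block-diagonal position, so the argument does not reach $SL(3,\mathbb{Z})$, precisely the case left undetermined in the preceding discussion.
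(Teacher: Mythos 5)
Your proposal is correct and follows essentially the same route as the paper: both embed $\mathbf{F}_2 \times \mathbf{F}_2$ into $SL(4,\mathbb{Z})$ via block-diagonal copies of the integer matrix group $\mathbf{G} \cong \mathbf{F}_2$, and then invoke Fact \ref{fact: RE}. You are merely a bit more explicit than the paper about the identity-block padding for $n > 4$ and about the Turing-machine simulation giving the upper bound $\mathfrak{L}(SL(n,\mathbb{Z})) \subseteq \mathsf{RE}$, which the paper leaves implicit in the theorem's proof and only states afterwards for supergroups.
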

	
	\begin{proof} The first equality is Fact \ref{fact: RE}.
		Recall that $ \varphi $ is an isomorphism from $ \mathbf{F}_2$ onto $
		\mathbf{G} $, the matrix group generated by the matrices $ M_a $ and $
		M_b $.
		Let $ \mathbf{G}' $ be the following group of matrices
		\[ \left \{
		\left [
		\begin{array}{clll}
		\multicolumn{2}{l}
		{\multirow{2}{*}{$M_1$}} & 0 & 0 \\
			& & 0 & 0 \\
		0&0 & \multicolumn{2}{c}{\multirow{2}{*}{ $M_2$}} \\
		0&0 & & \\
		\end{array}
		\right ], \ M_1, \ M_2 \in \mathbf{G} \right \}
		.\]
		
		We will define the mapping $ \psi:\mathbf{F}_2 \times \mathbf{F}_2
		\rightarrow \mathbf{G}' $ as $ \psi(g_1,g_2) =
		(\varphi(g_1),\varphi(g_2)) $ for all $ (g_1,g_2)\in \mathbf{F}_2
		\times \mathbf{F}_2 $ which is an isomorphism from $\mathbf{F}_2
		\times \mathbf{F}_2  $ onto $ \mathbf{G}' $.
		
		This proves that $\mathbf{F}_2 \times \mathbf{F}_2  $ is isomorphic to
		a subgroup of $ SL(4,\mathbb{Z}) $. The fact that $  \mathfrak{L} (
		\mathbf{F}_2 \times \mathbf{F}_2 ) $ is the set of recursively
		enumerable languages helps us to conclude that $
		\mathfrak{L}(SL(n,\mathbb{Z})) $  is the set of recursively enumerable
		languages for $ n \geq 4 $.
		\end{proof}
	
	Let us also state that the classes of languages recognized by automata
	over supergroups of $ SL(4,\mathbb{Z}) $ such as $ GL(4,\mathbb{Z})
	$ or $ SL(4,\mathbb{Q}) $ are also identical to the class of
	recursively enumerable languages, since such automata can be simulated
	by Turing machines.

\section{Open Questions}\label{Section: open}

Does there exist a $ SL(3,\mathbb{Z}) $-automaton recognizing $ W(\mathbb{Z}^3) $?
\footnote{Corollary 2 of \cite{CEO06} states that the word problem of a finitely generated 
Abelian group $H$ is recognized by a $G$-automaton if and only if $H$ has a finite index 
subgroup isomorphic to a subgroup of $G$. That corollary could be used to give an 
affirmative answer to this open question. Unfortunately, the corollary is wrong: 
Let $ H $ be an Abelian group and let $ G=\mathbf{F}_2 \times \mathbf{F}_2 $. 
$ \mathfrak{L}(\mathbf{F}_2 \times \mathbf{F}_2 )$ contains the word problem of any 
finitely generated Abelian group. Since $ \mathbf{F}_2 \times \mathbf{F}_2 $ is finitely 
generated, any finite index subgroup of $ \mathbf{F}_2 \times \mathbf{F}_2 $ is also 
finitely generated. Any finite index subgroup of $ \mathbf{F}_2 \times \mathbf{F}_2 $ is 
either free or has a subgroup of finite index that is a direct product of free groups 
\cite{BR84}. Any subgroup of an Abelian group is again Abelian. Hence, it is not 
possible that $ G$ has a finite index subgroup isomorphic to a subgroup of $ H $.}

Can we prove a stronger version of Theorem \ref{thm: polycf}, which is independent 
of the time component? For instance, for the case of $ \mathbf{F}_2 $, is it true that 
$ W(\mathbf{F}_2) \notin \mathfrak{L}(\mathbf{H})$ in general?

Can we describe the necessary properties of a group $ G $ so that 
$ \mathfrak{L}(G) $ contains $ W(\textbf{F}_2) $?

Little is known about $ BS(1,2) $-automata. Does $  \mathfrak{L}(BS(1,2)) $ 
contain every context-free language?

Which, if any, of the subset relationships in Figure \ref{fig: diagram} are 
proper inclusions?

\section*{Acknowledgements}
We thank the anonymous reviewers for their constructive comments.

\biblio{references}

\EndOfArticle 


\begin{thebibliography}{10}
\expandafter\ifx\csname url\endcsname\relax
  \def\url#1{{\tt #1}}\fi
\expandafter\ifx\csname urlprefix\endcsname\relax\def\urlprefix{}\fi

\bibitem{BR84}
\MakeUppercase{ G.~Baumslag}, \MakeUppercase{ J.~E. Roseblade}, Subgroups of
  direct products of free groups. {\it Journal of the London Mathematical
  Society\/} 2 (1984) 1, 44--52.

\bibitem{BO08}
\MakeUppercase{ N.~P. Brown}, \MakeUppercase{ N.~Ozawa}, {\it C*-algebras and
  finite-dimensional approximations\/}. ~88, American Mathematical Soc., 2008.

\bibitem{CEO06}
\MakeUppercase{ S.~Cleary}, \MakeUppercase{ M.~Elder}, \MakeUppercase{
  G.~Ostheimer}, The word problem distinguishes counter languages. {\it arXiv
  preprint math/0606415\/}  (2006).

\bibitem{Co05}
\MakeUppercase{ J.~M. Corson}, Extended finite automata and word problems. {\it
  International Journal of Algebra and Computation\/} 15 (2005) 03, 455--466.

\bibitem{DM00}
\MakeUppercase{ J.~Dassow}, \MakeUppercase{ V.~Mitrana}, Finite automata over
  free groups. {\it International Journal of Algebra and Computation\/} 10
  (2000) 06, 725--737.

\bibitem{EKO08}
\MakeUppercase{ M.~Elder}, \MakeUppercase{ M.~Kambites}, \MakeUppercase{
  G.~Ostheimer}, On groups and counter automata. {\it International Journal of
  Algebra and Computation\/} 18 (2008) 08, 1345--1364.

\bibitem{EO04}
\MakeUppercase{ G.~Z. Elston}, \MakeUppercase{ G.~Ostheimer}, On groups whose
  word problem is solved by a~counter automaton. {\it Theoretical Computer
  Science\/} 320 (2004) 2–3, 175 -- 185.

\bibitem{FMR67}
\MakeUppercase{ P.~C. Fischer}, \MakeUppercase{ A.~R. Meyer}, \MakeUppercase{
  A.~L. Rosenberg}, Real time counter machines. In: {\it Proceedings of the 8th
  Annual Symposium on Switching and Automata Theory (SWAT 1967)\/}. FOCS '67,
  1967, 148--154.

\bibitem{Fr03}
\MakeUppercase{ J.~Fraleigh}, \MakeUppercase{ V.~Katz}, {\it A first course in
  abstract algebra\/}. Addison-Wesley world student series, Addison-Wesley,
  2003.

\bibitem{Gr78}
\MakeUppercase{ S.~A. Greibach}, Remarks on blind and partially blind one-way
  multicounter machines. {\it Theoretical Computer Science\/} 7 (1978),
  311--324.

\bibitem{ISK76}
\MakeUppercase{ O.~H. Ibarra}, \MakeUppercase{ S.~K. Sahni}, \MakeUppercase{
  C.~E. Kim}, Finite automata with multiplication. {\it Theoretical Computer
  Science\/} 2 (1976) 3, 271 -- 294.

\bibitem{Ka06}
\MakeUppercase{ M.~Kambites}, Word problems recognisable by deterministic blind
  monoid automata. {\it Theor. Comput. Sci.\/} 362 (2006) 1, 232--237.

\bibitem{Ka09}
\MakeUppercase{ M.~Kambites}, Formal languages and groups as memory. {\it
  Communications in Algebra\/} 37 (2009) 1, 193--208.

\bibitem{KM79}
\MakeUppercase{ M.~I. Kargapolov}, \MakeUppercase{ J.~I. Merzljakov}, {\it
  Fundamentals of the Theory of Groups\/}. Springer-Verlag, 1979.

\bibitem{LS77}
\MakeUppercase{ R.~C. Lyndon}, \MakeUppercase{ P.~E. Schupp}, {\it
  Combinatorial Group Theory\/}. Springer-Verlag, 1977.

\bibitem{MS97}
\MakeUppercase{ V.~Mitrana}, \MakeUppercase{ R.~Stiebe}, The Accepting Power of
  Finite Automata over Groups. In: {\it New Trends in Formal Languages\/}.
  Springer-Verlag, 1997, 39--48.

\bibitem{MS01}
\MakeUppercase{ V.~Mitrana}, \MakeUppercase{ R.~Stiebe}, Extended finite
  automata over groups. {\it Discrete Appl. Math.\/} 108 (2001) 3, 287--300.

\bibitem{Re10}
\MakeUppercase{ E.~Render}, {\it Rational monoid and semigroup automata\/}.
  Ph.D. thesis, University of Manchester, 2010.

\end{thebibliography}
\end{document}